\newtheorem{problem}{\textbf{Problem}}
\newtheorem{definition}{\textbf{Definition}}
\newtheorem{lemma}{\textbf{Lemma}}
\newtheorem{remark}{\textbf{Remark}}
\newtheorem{proposition}{\textbf{Proposition}}
\title{\LARGE \bf
Model-based Randomness Monitor for Stealthy Sensor Attacks
}
\author{Paul J Bonczek, Shijie Gao, and Nicola Bezzo
\thanks{Paul J Bonczek, Shijie Gao, and Nicola Bezzo are with the Charles L. Brown Department of Electrical and Computer Engineering, and Link Lab, University of Virginia, Charlottesville, VA 22904, USA. Email: {\tt \{pjb4xn, sg9dn, nb6be\}@virginia.edu}}
}
\newcommand*{\N}{\mathbb{N}}
\newcommand*{\R}{\mathbb{R}}
\begin{document}

\maketitle
\thispagestyle{empty}
\pagestyle{empty}

\begin{abstract}

Malicious attacks on modern autonomous cyber-physical systems (CPSs) can leverage information about the system dynamics and noise characteristics to hide while hijacking the system toward undesired states. Given attacks attempting to hide within the system noise profile to remain undetected, an attacker with the intent to hijack a system will alter sensor measurements, contradicting with what is expected by the system's model. To deal with this problem, in this paper we present a framework to detect non-randomness in sensor measurements on CPSs under the effect of sensor attacks. 
Specifically, we propose a run-time monitor that leverages two statistical tests, the {\em Wilcoxon Signed-Rank} test and {\em Serial Independence Runs} test to detect inconsistent patterns in the measurement data. For the proposed statistical tests we provide formal guarantees and bounds for attack detection. We validate our approach through simulations and experiments on an unmanned ground vehicle (UGV) under stealthy attacks and compare our framework with other anomaly detectors.

\end{abstract}

\section{Introduction} \label{sec:introduction}
Modern autonomous systems are fitted with multiple sensors, computers, and networking devices that make them capable of many applications with little/no human supervision. Autonomous navigation, transportation, surveillance, and task oriented jobs are becoming more common and ready for deployment in real world applications especially in the automotive, industrial, and military domains. These enhancements in autonomy are possible thanks to the tight interaction between computation, sensing, communications, and actuation that characterize cyber-physical systems (CPSs). These systems are however vulnerable and susceptible to cyber-attacks like sensor spoofing which can compromise their integrity and the safety of the surroundings. In the context of autonomous vehicle technologies, one of the most typical threats is {\em hijacking} in which an adversary is capable to administer malicious attacks with the intent of leading the system to an undesired state. An example of this problem was demonstrated by authors in \cite{YachtSpoof} in which GPS data were spoofed to slowly drive a yacht off the intended route. 

If we look at the specific architecture of these robotic systems, typical autonomous applications employ go-to-goal and trajectory tracking and if one or more on-board sensors are compromised, system behavior can become unreliable. These vehicles typically have well studied dynamics and their sensors have specific expected behaviors according to their characterized noise models. 
An attacker that wants to perform a malicious hijacking can create non-random patterns or add small biases in the measurements to slowly push the system towards undesired states, for example creating undesired deviations as depicted in Fig.~\ref{fig:IntroImage}, all while remaining hidden within the system's and sensors noise profile. Hence, in order for an attacker to hijack the system with stealthy attack signals, a violation to the expected random behavior of the sensor measurements must occur.

\begin{figure}[t!]
\centering
\includegraphics[width=0.47\textwidth]{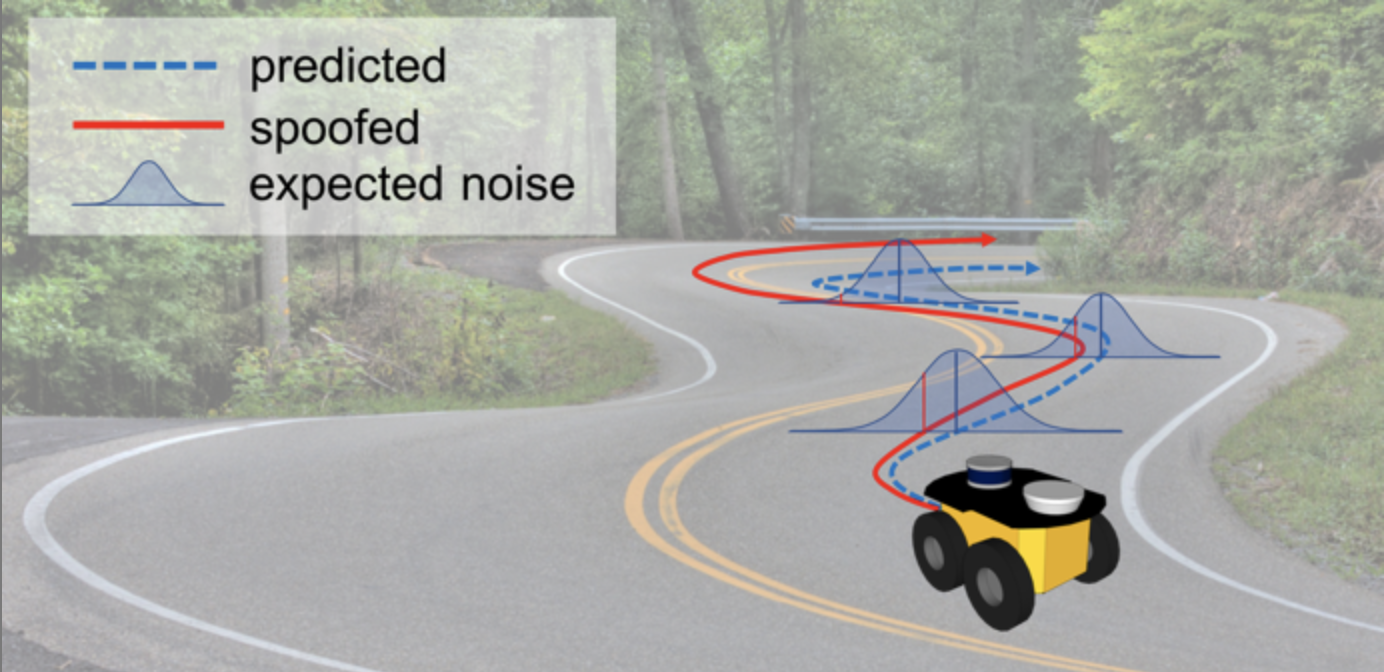}
\vspace{-5pt}
\caption{A pictorial representation of the problem discussed in this paper in which a cyber-attack is able to hijack a vehicle into unsafe states while remaining hidden within the noise profile of its sensors.}
\vspace{-16pt}
\label{fig:IntroImage}
\end{figure}

With these considerations and problem in mind, in this work, we leverage the known characteristics of the \textit{residual} -- the difference between sensor measurements and state prediction -- to build a run-time monitor to detect non-random behaviors. To monitor randomness, the non-parametric statistical Wilcoxon Signed-Rank \cite{Wilcoxon1} and Serial Independence Runs \cite{serial_test} tests are applied to individual sensors to determine if their measurements are being received randomly. The Wilcoxon test is an indicator of whether the residual is symmetric over its expected value, whereas the Serial Independence runs test indicates whether the sequences of residuals are arriving in a random manner.
Thus, the main objective of this work is to find hidden attacks exhibiting non-random behavior within the noise. Given the nature of the non-parametric statistical tests that we propose, only random behavior of the residual is considered here, leaving the magnitude bounds of the residual un-monitored. Several detectors providing magnitude bounds on attacks have been already researched in the literature, thus in this work we also present a framework to combine existing approaches for magnitude bound detection with the proposed randomness monitor. In doing so, our approach improves the state-of-the-art attack detection by adding an extra layer of checks.

\vspace{-3pt}

\subsection{Related Work}
\label{sec:Related Work}

This work builds on previous research considering deceptive cyber-attacks to {\em hijack} a system by injecting false data to sensor measurements while trying to remain undetected \cite{BadData}. Many of the previous works use the residual for detection, which gives clues whether sensor measurements are healthy (uncompromised). Previous works characterizing the effects of stealthy sensor attacks on the Kalman filter can be found in \cite{KF_attack,bezzo_SE}. Similarly, authors in \cite{BadData,CST1} discuss how stealthy, undetectable attacks can compromise closed-loop systems, causing state and system dynamic degradation

Several procedures and techniques that analyze the residual for attack detection exist, one of which is the Sequential Probability Ratio Testing (SPRT) \cite{SPRT2} that tests the sequence of incoming residuals one at a time by taking the log-likelihood function (LLF). 
The Cumulative Sum (CUSUM) procedure proposed in \cite{CUSUM1} and \cite{CUSUM_Journal} leverages the known characteristics of the residual covariance and sequentially sums the residual error to find changes in mean of the distribution. Compound Scalar Testing (CST) in \cite{CST1} is another technique which is computationally friendly by reducing the residual vector with the known residual covariance matrix into a scalar value with $\chi^2$ distribution. An improvement of CST in \cite{CST2} is made by including a coding matrix to sensor outputs that is unknown to attackers, then an iterative optimization algorithm is used to solve for a transform matrix to detect stealthy attacks. Similar to our work where monitors are placed on individual sensors, the authors in \cite{Trust_model} propose a Trust-based framework for sensor sets by ``side-channel" monitors to provide a weight for trustworthiness to determine whether sensors have been compromised. Other works have proposed attack resiliency by leveraging information from redundant sensing. In \cite{pajic_SE}, authors solve to reconstruct the state estimate of stochastic systems using an $l_0$ optimization problem when less than half of the sensors are compromised. Different from these previous works, we build a framework to monitor sensor measurements to find previously undetectable attacks by searching for non-random behavior.

The remainder of this work is organized as follows. In Section \ref{sec:preliminaries} we begin with system, estimator models and problem formulation, followed by the description of our Random Monitor framework in Section \ref{sec:framework}. In Section \ref{sec:State Degradation} an analysis of worst-case stealthy attacks and characterization of the effects on system performance is presented. Finally, in Section \ref{sec:Results} we demonstrate through simulations and experiments the performance of our framework augmented with boundary detectors before drawing conclusions in Section~\ref{sec:conclusion}.
\section{Preliminaries \& Problem Formulation} \label{sec:preliminaries}

In this work we consider autonomous systems whose dynamics can be described by a discrete-time linear time-invariant (LTI) system in the following form:
\vspace{-2pt}
\begin{equation}\label{eq:system1}
\begin{split}
\bm{x}_{k+1} &= \bm{A} \bm{x}_k + \bm{B} \bm{u}_k + \bm{\nu}_k \\
	\bm{y}_k&=\bm{C} \bm{x}_k + \bm{\eta}_k ,
\end{split}
\vspace{-4pt}
\end{equation}
with $\bm{A} \in \R^{n\times n}$ the state matrix, $\bm{B} \in \R^{n\times m}$ the input matrix, and $\bm{C} \in \R^{s\times n}$ the output matrix with the state vector $\bm{x}_k \in \R^{n}$, system input $\bm{u}_k \in \R^{m}$, output vector $\bm{y}_k \in \R^{s}$ providing measurements from $s$ sensors from the set $\mathcal{\bm{S}}=\{1,2,\dots,s\}$, and sampling time-instants $k \in \N$. Process and measurement noises are i.i.d. multivariate zero-mean Gaussian uncertainties $\bm{\nu} = \mathcal{N}(0,\bm{Q}) \in \R^n$ and $\bm{\eta} = \mathcal{N}(0,\bm{R}) \in \R^s$ with covariance matrices $\bm{Q} \in \R^{n\times n}, \bm{Q} \geq 0$ and $\bm{R} \in \R^{s\times s}, \bm{R} \geq 0$ and are assumed static.

During operations, a standard Kalman Filter (KF) is implemented to provide a state estimate $\hat{\bm{x}}_k \in \R^n$ in the form
\vspace{-3pt}
\begin{equation}\label{eq:Kalman}
	\hat{\bm{x}}_{k+1} = \bm{A} \hat{\bm{x}}_k + \bm{B} \bm{u}_k + \bm{L}(\bm{y}_k - \bm{C}\hat{\bm{x}}_k), 
\vspace{-2pt}
\end{equation}
where the Kalman gain matrix $\bm{L} \in \R^{n \times s}$ is 
\vspace{-2pt}
\begin{equation}
\label{eq:SteadyState_P_K}
	\bm{L} = \bm{A}\bm{P}\bm{C}^T(\bm{R} + \bm{C}\bm{P}\bm{C}^T)^{-1},
\vspace{-3pt}
\end{equation}
therefore, we assume that the KF is at steady state, i.e., $\lim_{k\to \infty} \bm{P}_k~=~\bm{P}$. The estimation error of the KF is defined as $\bm{e}_k = \bm{x}_k - \hat{\bm{x}}_k$ while its {\em residual} $\bm{r}_k$ is given by
\vspace{-2pt}
\begin{equation}
\label{eq:Residual}
	\bm{r}_k = \bm{y}_k - \bm{C}\hat{\bm{x}}_k = \bm{C}\bm{e}_k + \bm{\eta}_k,
	\vspace{-2pt}
\end{equation}
The covariance of the residual \eqref{eq:Residual} is defined as
\vspace{-2pt}
\begin{equation}
\label{eq:Residual_Covariance}
	\bm{\Sigma} = \mathrm{E}[\bm{r}_{k+1}\bm{r}_{k+1}^T]  = \bm{R} + \bm{C}\bm{P}\bm{C}^T \in \R^{s \times s}.
	\vspace{-2pt}
\end{equation}

In the absence of sensor attacks, the residual for the $i^{th}$ sensor $r_{k,i}, i \in \mathcal{S}$ follows a Gaussian distribution $r_{k,i} \sim \mathcal{N}(0,\sigma_i^2)$ where $\sigma_i^2$ is the $i^{th}$ diagonal element of the residual covariance matrix $\bm{\Sigma} \in \R^{s \times s}$ in \eqref{eq:Residual_Covariance} such that
\vspace{-3pt}
\begin{equation}
\label{eq:normal}
    \mathrm{E}[r_{k,i}] = 0, \text{ } \mathrm{Var}[r_{k,i}] = \sigma^2_i.
    \vspace{-2pt}
\end{equation}
We describe the system output considering sensor attacks as
\vspace{-4pt}
\begin{equation}\label{eq:output_equation}
	\bm{y}_k = \bm{C} \bm{x}_k + \bm{\eta}_k + \bm{\xi}_k ,
\vspace{-2pt}
\end{equation}
where $\bm{\xi}_k \in \R^s$ represents the sensor attack vector. Our proposed framework consists in adding a monitor on each sensor searching for non-random behavior of the sensor measurement residual, hence any sensor may be compromised.

\begin{definition}
A sensor measurement is random if:
\vspace{-2pt}
\label{consist_definition}
\begin{itemize}
\item{a sequence of residuals over a time window occurs in an unpredictable, pattern-free manner.}
\item{residuals have proper distributions over $\mathrm{E}[\bm{r}_k]$}.
\end{itemize}
\end{definition}

Since we are considering sensor spoofing, an attack signal $\bm{\xi}_k$ containing malicious data can disrupt randomness, causing measurements to display non-random behavior. Formally, the problem that we are interested in solving is:
\begin{problem} 
\label{problem1} {\textbf{Randomness of Measurements:}} 
Given the residual $\bm{r}_k$ between a measurement $\bm{y}_k$ and the corresponding prediction $\bm{C}\hat{\bm{x}}_k$ as defined in \eqref{eq:Residual}, find a policy to determine at run-time whether a sensor measurement is random, i.e., if any condition in Definition \ref{consist_definition} does not hold.
\end{problem}
\begin{section}{Randomness Monitoring Framework}
\label{sec:framework}

The overall cyber-physical system architecture including our Randomness Monitor framework is summarized in Fig.~\ref{fig:AttackDiagram}. The Randomness Monitor, augmented to any boundary detector providing magnitude bounds, is placed in the system feedback to monitor the residual sequence. 
\vspace{-7pt}
\begin{figure}[th!b]
\centering
\includegraphics[width=0.45\textwidth]{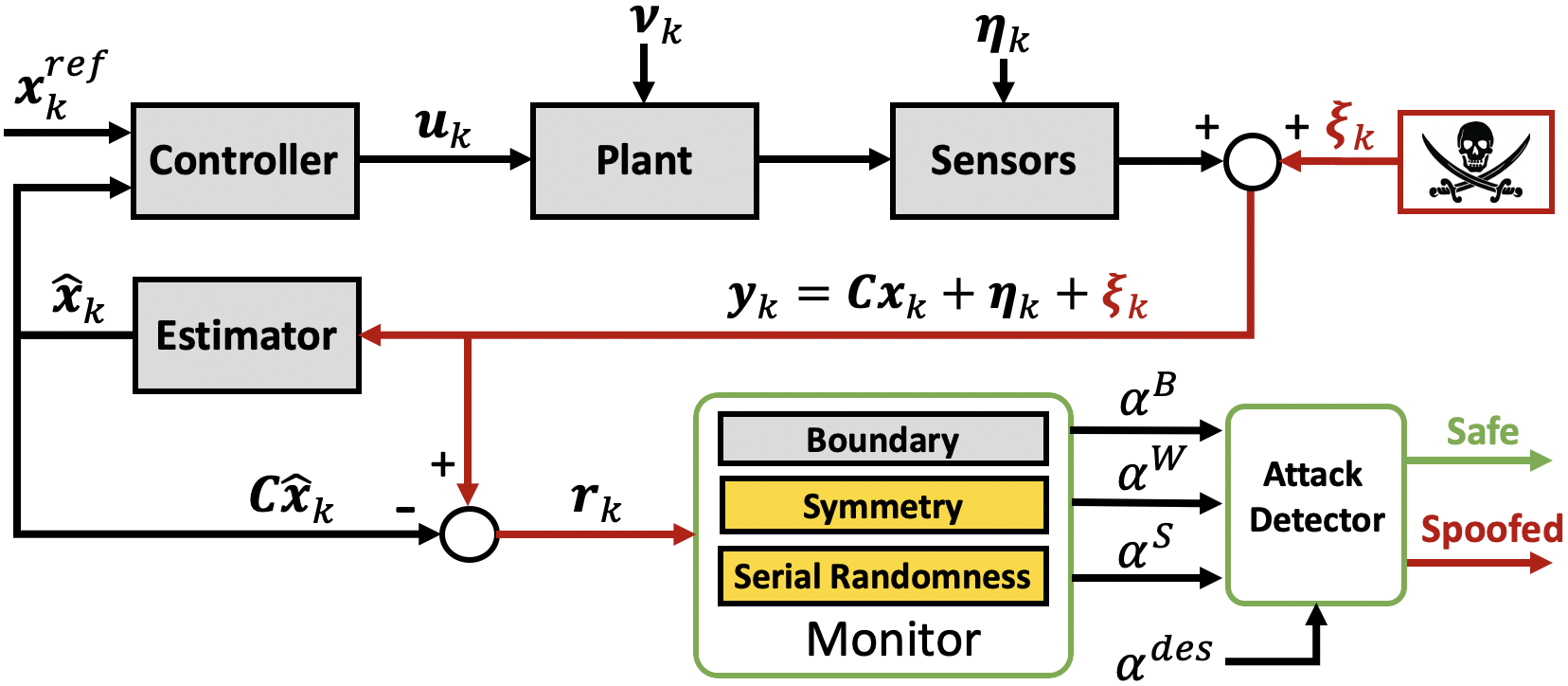}
\vspace{-4pt}
\caption{The architecture of a CPS while experiencing sensor attacks augmented with our monitoring technique.}
\label{fig:AttackDiagram}
\vspace{-6pt}
\end{figure}
We introduce a framework to monitor randomness of the residual sequence through two tests and provide tuning bounds for each to result in desired false alarm rates. 
From \eqref{eq:Residual}, the residual should have a symmetric distribution centered at zero and the sequence of residuals should arrive in a random order, having an absence of structure or patterns. For example, a continuously alternating pattern of ``negative" and ``positive" values, or a pattern of only ``negative" values would clearly not satisfy random sequences. 

Both tests operate online providing an alarm when the residual does not satisfy the conditions of each test. A desired false alarm rate $\alpha_i^{\text{des}} \in (0, 1)$ for each $i^{th}$ sensor is tuned for each test, and in the absence of sensor attacks, the observed alarm rate $\alpha_i \in [0, 1]$ for each test should match closely with the tuned desired value $\alpha_i \sim \alpha_i^{\text{des}}$.

\subsection{Residual Symmetry Monitor}
\label{sec:Wilcoxon}

To monitor whether the sequence of residuals are symmetrically distributed and zero-mean, we leverage the Wilcoxon Signed-Rank (WSR) test \cite{Wilcoxon1} as follows. A hypothesis test is formed by $\mathcal{H}_0$ for no attacks and $\mathcal{H}_a$ with attacks:
\vspace{-1pt}
\begin{equation}
\begin{split}
\label{eq:Wilcoxon_Hypothesis}
    \bigg\{ \begin{array}{l}
    \begin{aligned}
	&\mathcal{H}_0: \mathrm{E}[\bm{r}_k]=0 \textbf{ and } \bm{r}_k \text{ is symmetric}, \\
    &\mathcal{H}_a: \mathrm{E}[\bm{r}_k] \ne 0 \hspace*{6pt} \textbf{ or } \bm{r}_k \text{ is not symmetric}.
    \end{aligned}
    \end{array}
\end{split}
\vspace{-2pt}
\end{equation}
A monitor is built to check if the residual $\bm{r}_{k}$ sequence over a sliding monitoring window $T = (k - \ell + 1, k)$ for $\ell$ previous steps is symmetric. We denote the vector of residual sequences over the sliding window $T$ as $\bm{r}_{T}~=~(\bm{r}_{T,1},\dots,\bm{r}_{T,i},\dots,\bm{r}_{T,s})$ where the residual sequence for an $i^{th}$ sensor is $\bm{r}_{T,i} = (r_{k-\ell+1,i} ,\dots,r_{k,i})$. Following $\mathcal{H}_0$, we would expect that the number of positive and negative values of $\bm{r}_k$ over the monitoring window are equal. Additionally, a symmetric distribution indicates that the expected absolute magnitude of positive and negative residuals over a given window of length $\ell$ are equal,
\vspace{-2pt}
\begin{equation}
\label{eq:ExpectedMagnitude}
    \mathrm{E}[|\bm{r}_{T,i}^{+}|] = \mathrm{E}[|\bm{r}_{T,i}^{-}|], \text{ } i \in \mathcal{S},
    \vspace{-1pt}
\end{equation}
where $\mathrm{E}[|\bm{r}_{T,i}^{+}|]$ and $\mathrm{E}[|\bm{r}_{T,i}^{-}|]$ denote the expected absolute magnitude for positive and negative values of the residual $r_{k,i}$ within the window $T$ for any given $i^{th}$ sensor. In other words, we would expect the sum of absolute values from the residual to be equal for both the positive and negative values. The WSR test takes both the sign and magnitude of the residual into account to determine whether conditions satisfy $\mathcal{H}_0$. Large differences in the residual signs or signed magnitudes imply non-similar distributions, causing the test to reject the no attack assumption and triggering an alarm.

To perform the WSR test at each time step $k$, we first look at the $\ell$ number of residuals over the monitoring window $T$ of a given $i^{th}$ sensor, ranking the \textit{absolute values} of residuals $r_{T,i}$, starting with $rank =1$ for the smallest absolute value, $rank=2$ for the second smallest, and so on until reaching the largest absolute value with $rank=\ell$. Ranks of absolute values for positive (i.e. $|r_{T,i}^{+}|$) and negative (i.e.  $|r_{T,i}^{-}|$) residuals over the window $T$ are placed into the sets $\mathcal{R}_{k,i}^{+}$ and $\mathcal{R}_{k,i}^{-}$ at every time instance $k$, respectively.
\vspace{1pt}
\begin{remark}
\label{rem:Wilcoxon}
    For residuals equal to each other and not equal to $0$ (tied for the same rank), an average of the ranks that would have been assigned to these residuals is given to each of the tied values. Furthermore, residuals equal to $0$ are removed and $\ell$ is reduced accordingly. 
\end{remark}

Following, we compute the sum of ranks for both the positive and negative valued residuals,
\vspace{-2pt}
\begin{equation}
\label{eq:WilcoxonTestStat}
    W_{k,i}^{+} = \sum \mathcal{R}_{k,i}^{+}, \hspace*{10pt} W_{k,i}^{-} = \sum \mathcal{R}_{k,i}^{-}.
    \vspace{-2pt}
\end{equation}

Residuals with symmetric distributions have similar valued sum of ranks, i.e. $W_{k,i}^{+} \sim W_{k,i}^{-} $, whereas the sum of ranks in non-symmetric distributions are not similar $W_{k,i}^{+} \nsim W_{k,i}^{-}$ resulting in a rejection of $\mathcal{H}_0$ in \eqref{eq:Wilcoxon_Hypothesis}, which we will now discuss how to solve. Assuming a large window of size $\ell \geq 20$\footnote{For window length of smaller size, exact tables need to be used for probability distributions of the Wilcoxon Signed-Rank random variable \cite{Seigel}.} \cite{Seigel}, the Wilcoxon random variables $W_{k,i}^{+}$, $W_{k,i}^{-}$ converge to a Normal distribution (without attacks) as $\ell \to \infty$ and can be approximated to a standard normal distribution. The approximated expected value and variance of the two sum of ranks $W_{k,i}^{+}$ and $W_{k,i}^{-}$, denoted as $W_{k,i}^{\pm} = \{W_{k,i}^{+},W_{k,i}^{-}\}$ is
\begin{equation}
\label{eq:WilcoxonMeanVar}
    \begin{array}{ll}
    \vspace{1pt}
    \mathrm{E}[W_{k,i}^{\pm}] = \frac{\ell^2+\ell}{4}, \hspace{14pt} &
    \mathrm{Var}[W_{k,i}^{\pm}] = \frac{(\ell^2+\ell)(2\ell+1)}{24}.
    \end{array}
    \vspace{-2pt}
\end{equation}
The z-score of \eqref{eq:WilcoxonTestStat} for a given $i^{th}$ sensor is computed by
\vspace{-3pt}
\begin{equation}
\label{eq:WilcoxonZ}
    \hspace*{-6pt} Z_{k,i}^W \hspace*{-1pt} = \hspace*{-1pt} \frac{\min(W_{k,i}^{\pm}) \hspace*{-1pt} -\hspace*{-1pt} \mathrm{E}[W_{k,i}^{\pm}]}{\sqrt{\mathrm{Var}[W_{k,i}^{\pm}]}} \hspace*{-1pt}= \hspace*{-1pt} \frac{\min(W_{k,i}^{\pm}) \hspace*{-1pt} - \hspace*{-1pt} \frac{(\ell^2+\ell)}{4}}{\sqrt{ \frac{(\ell^2+\ell)(2\ell+1)}{24} } },
    \vspace{-3pt}
\end{equation}
and the p-value used to determine whether to reject the null hypothesis $\mathcal{H}_0$ (i.e. no attacks) is computed from \eqref{eq:WilcoxonZ} as
\vspace{-2pt}
\begin{equation}
\label{eq:P_approx_Wilcoxon}
	p_{k,i}^W = \Phi(|Z_{k,i}^W|) = 2\cdot\frac{1}{\sqrt{2\pi}}\int_{|Z_{k,i}^W|}^{\infty} \text{exp}\bigg\{\frac{-\lambda^2}{2}\bigg\} d\lambda.
	\vspace{-2pt}
\end{equation}

When $p_{k,i}^W$ falls below the threshold $\tau_i^W = \alpha_i^{\text{des}}$, i.e., $p_{k,i}^W < \tau_i^{W}$, we reject $\mathcal{H}_0$ from \eqref{eq:Wilcoxon_Hypothesis} and an alarm $\psi_{k,i}^W = 1$ is triggered, otherwise $\psi_{k,i}^W =0$. In the absence of attacks, the alarm rate $\alpha_i^W$ for an $i^{th}$ sensor should be approximately the same as the desired false alarm rate $\alpha_i^W \sim \alpha_i^{\text{des}}$. Computation of $\alpha_i^W$ is over the sliding window $T^{\alpha} = (k-\ell^{\alpha}+1,k)$ of length $\ell^{\alpha}$ by $\alpha_i^W = \frac{1}{\ell^{\alpha} } \sum_{j = k-\ell^{\alpha}+1}^k \psi_{j,i}^W $. Conversely, an attack that affects the residual distribution symmetry, triggering the alarm $\psi_{k,i}^W$ more frequently, causing an elevation of alarm rate $\alpha_i^W$. For alarm rates exceeding a user defined alarm rate threshold, i.e. $\alpha_i^W > \alpha_i^{\tau}$, the $i^{th}$ sensor is deemed compromised. In the following lemma we provide a proof for bounds of the WSR test variables \eqref{eq:WilcoxonTestStat} to satisfy a desired false alarm rate $\alpha_i^{\text{des}}$.
\begin{lemma}
\label{lem:W_Lemma}
Given the residual $r_{k,i}$ for an $i^{th}$ sensor over a monitoring window $T$ consisting of $\ell$ residuals and desired false alarm rate $\alpha_i^{\text{des}}$, an alarm is triggered by the WSR test when $ \Omega^W_- \leq \{W_{k,i}^{\pm}\} \leq \Omega^W_+$ is not satisfied where
\vspace{-2pt}
\begin{equation} 
\label{eq:W_lemma}
    \hspace*{-5pt} \Omega^W_{\pm} \hspace*{-2pt}= \hspace*{-1pt} \pm | \Phi^{-1} \hspace*{-1pt} ( \alpha_i^{\text{des}}\hspace*{-2pt}/2) | \hspace*{-1pt} \sqrt{ \hspace*{-1pt} (\ell^2 \hspace*{-1pt} + \hspace*{-1pt} \ell)(2\ell \hspace*{-1pt}+\hspace*{-1pt} \hspace*{-1pt}1) \hspace*{-1pt}/24} \hspace*{-1pt}+\hspace*{-1pt} (\ell^2 \hspace*{-1pt} \hspace*{-1pt}+ \hspace*{-1pt}\ell) \hspace*{-1pt}/4.
\end{equation}
\end{lemma}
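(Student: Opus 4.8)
The plan is to characterize the no-alarm (acceptance) region of the test directly in terms of the raw statistics $\{W_{k,i}^{\pm}\}$ by inverting the chain of definitions that maps these statistics, through the $z$-score \eqref{eq:WilcoxonZ}, to the reported p-value \eqref{eq:P_approx_Wilcoxon}. Since an alarm fires exactly when $p_{k,i}^W < \alpha_i^{\text{des}}$, the acceptance region is $p_{k,i}^W \geq \alpha_i^{\text{des}}$, and the goal is to convert this scalar inequality on the p-value into a two-sided interval on the $W_{k,i}^{\pm}$.

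First I would rewrite the p-value \eqref{eq:P_approx_Wilcoxon} as a two-sided standard-normal tail, $p_{k,i}^W = 2\bigl(1 - \Phi_{\mathrm{N}}(|Z_{k,i}^W|)\bigr)$, where $\Phi_{\mathrm{N}}$ denotes the standard normal CDF. Because this is strictly decreasing in $|Z_{k,i}^W|$, the acceptance condition $p_{k,i}^W \geq \alpha_i^{\text{des}}$ is equivalent to $|Z_{k,i}^W| \leq \Phi_{\mathrm{N}}^{-1}(1 - \alpha_i^{\text{des}}/2) = |\Phi^{-1}(\alpha_i^{\text{des}}/2)|$, the usual two-sided critical value at level $\alpha_i^{\text{des}}$. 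This is the single place where the Gaussian approximation (valid for $\ell \geq 20$) and the two-sided form of the hypothesis \eqref{eq:Wilcoxon_Hypothesis} enter, and pinning down the factor of two and the sign of the quantile is the one genuinely careful step.

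Next I would un-standardize: substituting the definition of $Z_{k,i}^W$ from \eqref{eq:WilcoxonZ} into $|Z_{k,i}^W| \leq |\Phi^{-1}(\alpha_i^{\text{des}}/2)|$, then multiplying by the standard deviation $\sqrt{(\ell^2+\ell)(2\ell+1)/24}$ and adding the mean $(\ell^2+\ell)/4$ from \eqref{eq:WilcoxonMeanVar}, gives exactly $\Omega^W_- \leq \min(W_{k,i}^{\pm}) \leq \Omega^W_+$ with $\Omega^W_{\pm}$ as in \eqref{eq:W_lemma}. This is a routine rearrangement once the critical value is in hand.

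The remaining and conceptually key step is to promote the bound on $\min(W_{k,i}^{\pm})$ to the bound on the whole set $\{W_{k,i}^{\pm}\}$ asserted in the statement. For this I would use the identity $W_{k,i}^{+} + W_{k,i}^{-} = \sum_{j=1}^{\ell} j = (\ell^2+\ell)/2$, which follows from \eqref{eq:WilcoxonTestStat} because the ranks $1,\dots,\ell$ are partitioned between $\mathcal{R}_{k,i}^{+}$ and $\mathcal{R}_{k,i}^{-}$. Hence $W_{k,i}^{+}$ and $W_{k,i}^{-}$ are reflections of each other about the mean $(\ell^2+\ell)/4$, and since $\Omega^W_+$ and $\Omega^W_-$ are symmetric about that same mean, $\min(W_{k,i}^{\pm}) \geq \Omega^W_-$ holds if and only if $\max(W_{k,i}^{\pm}) \leq \Omega^W_+$, i.e. if and only if both elements of $\{W_{k,i}^{\pm}\}$ lie in $[\Omega^W_-, \Omega^W_+]$. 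Negating this acceptance region yields the alarm condition exactly as stated. The only caveat I would flag is the tie/zero handling of Remark~\ref{rem:Wilcoxon}, which changes $\ell$ and hence the numerical values of $\Omega^W_{\pm}$, but leaves the structure of the argument untouched.
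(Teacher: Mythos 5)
Your proof is correct, and its skeleton---inverting the two-sided p-value \eqref{eq:P_approx_Wilcoxon} at the critical value $|\Phi^{-1}(\alpha_i^{\text{des}}/2)|$ and then un-standardizing \eqref{eq:WilcoxonZ} using the mean and variance in \eqref{eq:WilcoxonMeanVar}---matches the paper's proof. The genuine difference lies in how each argument passes from a bound on a single statistic to the two-sided claim about the pair $\{W_{k,i}^{\pm}\}$. The paper performs two separate rearrangements of \eqref{eq:WilcoxonZ}: one for $\min(W_{k,i}^{\pm})$ with critical value $\Phi^{-1}(\alpha_i^{\text{des}}/2)$, and one for $\max(W_{k,i}^{\pm})$ with $\Phi^{-1}(1-\alpha_i^{\text{des}}/2)$, merging them via the quantile symmetry $|\Phi^{-1}(\alpha_i^{\text{des}}/2)| = \Phi^{-1}(1-\alpha_i^{\text{des}}/2)$; but the logical link between the two---``if the lower bound on the min holds then the upper bound holds as well''---is asserted rather than derived, which is a gap, since \eqref{eq:WilcoxonZ} is defined only through $\min(W_{k,i}^{\pm})$. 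You supply exactly the missing ingredient: the rank-sum identity $W_{k,i}^{+} + W_{k,i}^{-} = \ell(\ell+1)/2$ from \eqref{eq:WilcoxonTestStat}, which makes $W_{k,i}^{+}$ and $W_{k,i}^{-}$ reflections of each other about the mean $(\ell^2+\ell)/4$, so that $\min(W_{k,i}^{\pm}) \geq \Omega^W_-$, $\max(W_{k,i}^{\pm}) \leq \Omega^W_+$, and ``both statistics lie in $[\Omega^W_-,\Omega^W_+]$'' become provably equivalent statements; negating the acceptance region then gives \eqref{eq:W_lemma} exactly. Your version is therefore tighter than the paper's where it matters most, and your closing caveat is handled cleanly: the tie rule of Remark~\ref{rem:Wilcoxon} assigns averaged ranks, which preserves the total rank sum, so the reflection identity (and with it your argument) survives ties, with only the numerical values of $\Omega^W_{\pm}$ shifting when zeros reduce $\ell$.
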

\vspace{3pt}
\begin{proof}
From the Wilcoxon test statistic equaling the sum of ranks in \eqref{eq:WilcoxonTestStat}, we can rearrange \eqref{eq:WilcoxonZ} such that $\min(W_{k,i}^{\pm}) = Z_{k,i}^{W_{\text{crit}}}\sqrt{(\ell^2+\ell)(2\ell+1)/24} + (\ell^2+\ell)/4$ where $Z_{k,i}^{W_{\text{crit}}} = \Phi^{-1}(\alpha_i^{\text{des}}/2)$ is the critical value of $Z_{k,i}^W$ for $\min(W_{k,i}^{\pm})$ satisfying a desired alarm rate $\alpha_i^{\text{des}}$ to not reject \eqref{eq:Wilcoxon_Hypothesis}. The lower bound of $\{W_{k,i}^{-},W_{k,i}^{+}\}$ must satisfy
\vspace{-3pt}
\begin{equation}\label{eq:proof1_1}
\begin{split}
\Omega^W_- =& \hspace*{3pt} \Phi^{-1}(\alpha_i^{\text{des}}/2) \sqrt{(\ell^2+\ell)(2\ell+1)/24} \\
&+ (\ell^2+\ell)/4 \leq \min(W_{k,i}^{-},W_{k,i}^{+}), 
\end{split}
\vspace{-3pt}
\end{equation}
to not sound off an alarm $\psi_{k,i}^W $. Conversely, we want to show that if the lower bound $\Omega^W_- \leq \min(W_{k,i}^{\pm})$ in \eqref{eq:proof1_1} holds then the upper bound $\Omega^W_+$ holds as well. By again manipulating \eqref{eq:WilcoxonZ} such that we take the maximum $\max(W_{k,i}^{\pm}) = Z_{k,i}^{W_{\text{crit}}}\sqrt{(\ell^2+\ell)(2\ell+1)/24} + (\ell^2+\ell)/4$ where this time $Z_{k,i}^{W_{\text{crit}}} = \Phi^{-1}(1 - \alpha_i^{\text{des}}/2)$ is the critical value of $Z_{k,i}^W$ for the upper bound $\max(W_{k,i}^{\pm})$ satisfying a desired alarm rate $\alpha_i^{\text{des}}$ to not reject \eqref{eq:Wilcoxon_Hypothesis}, the upper bound is written as
\begin{equation}\label{eq:proof1_2}
\begin{split}
\Omega^W_+ = & \hspace*{3pt} \Phi^{-1}(1-\alpha_i^{\text{des}}\hspace*{-2pt}/2) \sqrt{(\ell^2+\ell)(2\ell+1)/24} \\
&+ (\ell^2+\ell)/4 \geq \max(W_{k,i}^{-},W_{k,i}^{+}), 
\end{split}
\vspace{-4pt}
\end{equation}
to not trigger the alarm $\psi_{k,i}^W$. In the calculation of the critical z-score value from the standard normal distribution $\mathcal{N}(0,1)$ to satisfy a given desired alarm rate $\alpha_i^{\text{des}}$, it is easy to show that $|\Phi^{-1}(\alpha_i^{\text{des}}\hspace*{-1pt}/2)| = \Phi^{-1}(1-\alpha_i^{\text{des}}\hspace*{-1pt}/2)$ and $\Phi^{-1}(\alpha_i^{\text{des}}\hspace*{-1pt}/2) = -|\Phi^{-1}(\alpha_i^{\text{des}}\hspace*{-1pt}/2)|$ giving the final bounds of $ \Omega^W_- \leq ( W_{k,i}^{\pm} = \{W_{k,i}^{-},W_{k,i}^{+}\} ) \leq \Omega^W_+$ as
\vspace{-1pt}
\begin{equation}\label{eq:proof1_3}
\begin{split}
\hspace*{-1pt} -& | \Phi^{-1}(\alpha_i^{\text{des}}\hspace*{-2pt}/2)| \sqrt{(\ell^2 \hspace*{-1pt}+\hspace*{-1pt}\ell)(2\ell \hspace*{-1pt}+\hspace*{-1pt}1)/24} + (\ell^2 \hspace*{-1pt}+\hspace*{-1pt}\ell)/4 \leq \\
 & W_i^{\pm} \hspace*{-1pt} \leq \hspace*{-1pt} |\Phi^{-1}(\alpha_i^{\text{des}}\hspace*{-2pt}/2)| \sqrt{ (\ell^2 \hspace*{-1pt}+ \hspace*{-1pt}\ell)(2\ell \hspace*{-1pt}+ \hspace*{-1pt}1)/24} \hspace*{-1pt}+\hspace*{-1pt} (\ell^2 \hspace*{-1pt}+ \hspace*{-1pt}\ell)/4, \nonumber
\end{split}
\vspace{-4pt}
\end{equation}
satisfying the bounds of $\Omega^W_{\pm}$ in \eqref{eq:W_lemma}. With this we conclude that if $\min(W_{k,i}^{\pm})$ does not satisfy \eqref{eq:proof1_1} then $ \Omega^W_- \leq \{W_{k,i}^{-},W_{k,i}^{+}\} \leq \Omega^W_+$ is not satisfied, triggering alarm $\psi_{k,i}^W$ for a desired false alarm rate $\alpha_i^{\text{des}} $, ending the proof.
\end{proof}

\subsection{Serial Randomness Monitor}
\label{sec:Serial Test}

The WSR test alone is not sufficient to test for randomness, since an attacker could manipulate measurements by creating specific patterns to avoid detection on the WSR test. To test further, we need to determine if the sequence of residuals are being received randomly by leveraging the Serial Independence runs (SIR) test \cite{serial_test}. The SIR test examines the number of runs that occur over the sequence, where a ``run" is defined as one or more consecutive residuals that are greater or less than the previous value. A random sequence of residuals over a given window length should exhibit a specific expected number of runs: too many or too few number of runs would not satisfy random sequential behavior. A hypothesis test is formed with $\mathcal{H}_0$ for the absence of sensor attacks and $\mathcal{H}_a$ where attacks are present by 
\vspace{-2pt}
\begin{equation}
    \label{eq:Serial_Hypothesis}
    \begin{array}{ll}
    \begin{aligned}
	\hspace{4pt} \mathcal{H}_0\text{: } N_R = \mathrm{E}[N_R], \hspace{14pt} & \mathcal{H}_a\text{: } N_R \ne \mathrm{E}[N_R],
    \end{aligned}
    \end{array}
\vspace{-2pt}
\end{equation}
where $N_R$ is the number of observed runs, to determine whether the number of runs satisfy a randomly behaving sequence. First, we take the difference of residuals at time instances $k$ and $k-1$ over a window $T'$
\vspace{-3pt}
\begin{equation}
    \label{eq:Calc_Runs}
	\bm{r}_{T',i}' := r_{k,i}'= r_{k,i} - r_{k-1,i} \text{ }, \text{ } k \in T',
	\vspace{-1pt}
\end{equation}
where $T' = \{k -\ell+2, \dots,k \} = T \setminus \{k-\ell+1\}$ is the monitor window $T$ shortened by one by removing the oldest time instance. This in turn gives us $\ell' = \ell-1$ residual differences.
\begin{remark}
\label{rem:remark2}
A residual difference $r_{k,i}'~=~0$, $k \in T'$ from \eqref{eq:Calc_Runs} is not considered in the test and the size of $\ell'$ is reduced accordingly, i.e., $\ell'=\ell'-1$.
\end{remark}

From the sequence of residual differences \eqref{eq:Calc_Runs}, we take the sign of each residual within the window $T'$,
\vspace{-2pt}
\begin{equation}
\label{eq:Calc_RunsSigns}
	\text{sign}(\bm{r}_{k,i}'), \text{ } k \in T',
\vspace{-2pt}
\end{equation}
forming a sequence of $\ell'$ positive and negative signs. The number of runs $N_R$ are observed over the sequence of $\ell'$ residual differences. The expected mean and variance of runs \cite{serial_test} are computed by
\begin{equation}
\label{eq:SerialMeanVar}
    \begin{array}{ll}
    \begin{aligned}
    \hspace{4pt} \mathrm{E}[N_R] = \frac{2\ell'-1}{3}, \hspace{14pt} & \mathrm{Var}[N_R] = \frac{16\ell'-29}{90}.
    \end{aligned}
    \end{array}
    \vspace{-1pt}
\end{equation}

Assuming large data sets (i.e. window length $\ell \geq 25$) \cite{serial_test}, the distribution of $N_R$ converges to a normal distribution as $\ell' \to \infty$ and can be approximated to a zero mean unit variance standard normal distribution $N_R \sim \mathcal{N}(0,1)$. From the number of observed runs $N_R$ and number of residual differences $\ell'$, we compute the z-score test statistic for Serial Independence from a standard normal distribution
\vspace{-2pt}
\begin{equation}
    \label{eq:Z_approx_serial}
	Z_{k,i}^S = \frac{N_R- \mathrm{E}[N_R]}{\sqrt{\mathrm{Var}[N_R]}} = \frac{N_R-\big(2\ell'-1\big)/3}{\sqrt{\big(16\ell'-29\big)/90}}.
	\vspace{-3pt}
	\end{equation}
Using the z-score from \eqref{eq:Z_approx_serial} we compute the p-value of the observed signed residual differences by
\vspace{-2pt}
\begin{equation}
\label{eq:p_serial}
	p_{k,i}^S = \Phi(|Z_{k,i}^S|) = 2\cdot\frac{1}{\sqrt{2\pi}}\int_{|Z_{k,i}^S|}^{\infty} \text{exp}\bigg\{\frac{-|\lambda|^2}{2}\bigg\} d\lambda.
	\vspace{-1pt}
\end{equation}

When $p_{k,i}^S < \tau_i^S$ is satisfied where $\tau_i^S = \alpha_i^{\text{des}}$ denotes the threshold, we reject the null hypothesis $\mathcal{H}_0$ from \eqref{eq:Serial_Hypothesis} and an alarm $\psi_{k,i}^S =1$ is triggered. In the absence of attacks, the alarm rate $\alpha_i^S$ is approximately the same as the desired false alarm rate $\alpha_i^S \sim \alpha_i^{\text{des}}$. Alarm rate $\alpha_i^S$ over the sliding window $T^{\alpha}$ is computed by $\alpha_i^S = \frac{1}{\ell^{\alpha} } \sum_{j = k-\ell^{\alpha}+1}^k \psi_{j,i}^S $. Alarm rates exceeding a user defined alarm rate threshold, i.e. $\alpha_i^S > \alpha_i^{\tau}$, signifies that the $i^{th}$ sensor is compromised.
\begin{remark}
\label{rem:serial}
    A special case of triggering alarm $\psi_{k,i}^S = 1$ is when Remark \ref{rem:remark2} is satisfied, when two consecutive residuals are equal. Since $ r_{k,i} \sim \mathcal{N}(0,\sigma^2_i)$, the probability of having two residuals of the same value is equal to $0$. 
\end{remark}

The following lemma provides a proof for bounds of $N_R$ in the SIR test to satisfy a desired false alarm rate $\alpha_i^{\text{des}}$.
\begin{lemma}
\label{lem:SI_Lemma}
Given the residual differences $r_{k,i}' = r_{k,i}- r_{k-1,i} $ for an $i^{th}$ sensor over a window $T'$ and desired false alarm rate $\alpha_i^{\text{des}}$, an alarm is triggered by the SIR test when $ \Omega^S_- \leq N_R \leq \Omega^S_+$ is not satisfied where
\vspace{-2pt}
\begin{equation} 
\label{eq:SI_lemma}
    \hspace*{-3pt} \Omega^S_{\pm} \hspace*{-1pt}=\hspace*{-1pt} \pm | \Phi^{-1} ( \alpha_i^{\text{des}} \hspace*{-2pt}/2 ) | \sqrt{(16\ell' \hspace*{-2pt} - \hspace*{-2pt} 29)/90} + (2\ell' \hspace*{-1pt}- \hspace*{-1pt}1)/3. 
    \vspace{-2pt}
\end{equation}
\end{lemma}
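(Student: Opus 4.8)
The plan is to mirror the argument of Lemma~\ref{lem:W_Lemma}, since the SIR test shares the same normal-approximation structure as the WSR test but is in fact cleaner: the statistic $N_R$ enters the decision directly, and the p-value in \eqref{eq:p_serial} already uses $|Z_{k,i}^S|$, so the test is intrinsically two-sided and no $\min/\max$ bookkeeping is needed. First I would invert the z-score definition \eqref{eq:Z_approx_serial}, writing $N_R = Z_{k,i}^S \sqrt{\mathrm{Var}[N_R]} + \mathrm{E}[N_R]$, and substitute the closed-form moments from \eqref{eq:SerialMeanVar}, namely $\mathrm{E}[N_R] = (2\ell'-1)/3$ and $\mathrm{Var}[N_R] = (16\ell'-29)/90$.

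Next I would identify the critical z-scores that delimit the non-rejection region for the desired false-alarm rate $\alpha_i^{\text{des}}$. Because the rule rejects $\mathcal{H}_0$ in \eqref{eq:Serial_Hypothesis} when $p_{k,i}^S < \tau_i^S = \alpha_i^{\text{des}}$ and the p-value is two-tailed, the lower critical value is $\Phi^{-1}(\alpha_i^{\text{des}}/2)$ and the upper critical value is $\Phi^{-1}(1-\alpha_i^{\text{des}}/2)$. Substituting each into the inverted z-score yields the lower bound $\Omega^S_- = \Phi^{-1}(\alpha_i^{\text{des}}/2)\sqrt{(16\ell'-29)/90} + (2\ell'-1)/3$ and the upper bound $\Omega^S_+ = \Phi^{-1}(1-\alpha_i^{\text{des}}/2)\sqrt{(16\ell'-29)/90} + (2\ell'-1)/3$.

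Finally, exactly as in Lemma~\ref{lem:W_Lemma}, I would invoke the symmetry of the standard normal quantile function, $\Phi^{-1}(\alpha_i^{\text{des}}/2) = -|\Phi^{-1}(\alpha_i^{\text{des}}/2)|$ and $\Phi^{-1}(1-\alpha_i^{\text{des}}/2) = |\Phi^{-1}(\alpha_i^{\text{des}}/2)|$, to collapse both expressions into the single $\pm$ form of \eqref{eq:SI_lemma}. The non-rejection region is then precisely $\Omega^S_- \leq N_R \leq \Omega^S_+$, so any observed $N_R$ outside this interval forces $|Z_{k,i}^S|$ above the critical magnitude, drives $p_{k,i}^S$ below $\tau_i^S$, and triggers the alarm $\psi_{k,i}^S = 1$ at the prescribed rate $\alpha_i^{\text{des}}$.

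I do not anticipate a genuine obstacle; the one point requiring care is that the runs test is inherently two-sided — both too few runs (a drifting or patterned sequence) and too many runs (an over-alternating sequence) violate randomness — so the proof must retain \emph{both} bounds rather than reduce to a single tail, as the $\min(W_{k,i}^{\pm})$ formulation permitted in Lemma~\ref{lem:W_Lemma}. The supporting assumption to keep explicit is the normal approximation of $N_R$, valid for $\ell \geq 25$, which legitimizes using $\Phi^{-1}$ in place of exact runs-distribution tables.
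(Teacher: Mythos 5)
Your proposal is correct and follows essentially the same route as the paper's own proof: invert the z-score relation \eqref{eq:Z_approx_serial} to express $N_R$ in terms of the critical quantile, substitute the moments \eqref{eq:SerialMeanVar}, and use the symmetry $\Phi^{-1}(1-\alpha_i^{\text{des}}/2) = |\Phi^{-1}(\alpha_i^{\text{des}}/2)|$ to obtain the two-sided interval $\Omega^S_- \leq N_R \leq \Omega^S_+$ outside of which the alarm fires. The only difference is stylistic — you derive the two tails separately before collapsing them (as in the proof of Lemma~\ref{lem:W_Lemma}), whereas the paper writes the two-sided bound directly via $|Z_{k,i}^S| = |\Phi^{-1}(\alpha_i^{\text{des}}/2)|$.
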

\vspace{4pt}

\begin{proof}
With an observed number of runs $N_R$ within a window of $\ell'$ residual differences, we can rearrange \eqref{eq:Z_approx_serial} such that $N_R= |Z_{k,i}^S|\sqrt{(16\ell'-29)/90} + (2\ell'-1)/3$ where $|Z_{k,i}^S| = |\Phi^{-1}(\alpha_i^{\text{des}}/2)|$, we find the bounds of $N_R$ to not reject \eqref{eq:Serial_Hypothesis} for a desired false alarm rate $\alpha_i^{\text{des}}$ are
\vspace{-2pt}
\begin{equation}\label{eq:proof2}
\begin{split}
\hspace*{-5pt} -&| \Phi^{-1}( \alpha_i^{\text{des}} \hspace*{-2pt} /2) | \sqrt{(16\ell' \hspace*{-1pt} - \hspace*{-1pt}29)/90} \hspace*{-1pt}+\hspace*{-1pt} (2\ell' \hspace*{-1pt}-\hspace*{-1pt}1)/3 \leq N_R \\
&\leq | \Phi^{-1}( \alpha_i^{\text{des}} \hspace*{-2pt}/2) | \sqrt{(16\ell'\hspace*{-1pt}- \hspace*{-1pt} 29)/90} \hspace*{-1pt}+\hspace*{-1pt} (2\ell' \hspace*{-1pt}- \hspace*{-1pt}1)/3. 
\end{split}
\vspace{-2pt}
\end{equation}
From \eqref{eq:proof2} we can finally obtain the bounds of $\Omega^S_{\pm}$ in \eqref{eq:SI_lemma} for alarm triggering at a desired false alarm rate $\alpha_i^{\text{des}} $.
\end{proof}

\end{section}
\begin{section}{Stealthy Attack Analysis}
\label{sec:State Degradation}

This section analyzes the advantages of including the proposed randomness monitoring framework into well known boundary/bad-data attack detectors. To this end, we first introduce two well known anomaly (boundary) detectors -- Bad-Data \cite{BadData} and Cumulative Sum \cite{CUSUM1} detectors -- and analyze the effects of stealthy attacks on a system with and without our Randomness Monitor.

\subsection{Boundary Detectors}
\label{sec:BoundaryDetectors}

To show that our framework can easily be augmented with any detector that provides magnitude boundaries, we consider two different boundary detectors found in the CPS security literature. Both boundary detectors discussed in this section leverage the absolute value of the residual \eqref{eq:Residual} for attack detection. Consequently, in the absence of attacks (i.e. $\bm{\xi}_k = \bm{0}$), this leads to $|r_{k,i}|$ following a half-normal distribution \cite{half_normal} defined by
 \vspace{-1pt}
\begin{equation}\label{eq:half_normal}
    \mathrm{E}[|r_{k,i}|] = \sqrt{2/\pi}\sigma_i, \text{ }\mathrm{ Var}[|r_{k,i}|] = \sigma^2_i(1 - 2/\pi).
    \vspace{-2pt}
\end{equation}
where $\sigma^2_i$ was defined as the $i^{th}$ diagonal element in \eqref{eq:Residual_Covariance}.

The first detector that we consider is the \textit{Bad-Data Detector} (BDD) \cite{BadData}, a benchmark attack detector to find anomalies in sensor measurements, alarming when the residual error goes beyond a threshold. Similar to our detection framework in Section \ref{sec:framework}, the BDD can also be tuned for a desired false alarm rate $\alpha_i^{\text{des}}$. Considering the residual $r_{k,i}$ in \eqref{eq:Residual}, the BDD procedure for each $i^{th}$ sensor is as follows:

\vspace{3pt}
\centerline{\textbf{Bad-Data Detector Procedure}}
\vspace{2pt}
\hrule
\vspace{-6pt}
  \begin{equation}
  \label{eq:BadData}
        \textbf{If } |r_{k,i}| > \tau_i^B, \text{ then alarm } \psi_{k,i}^B = 1, \text{ } i \in \mathcal{S},
  \end{equation}
  \vspace{-10pt}
  \hrule
\vspace{4pt}
  
Assuming the system is without attacks, the tuned threshold $\tau_i^B$ for the BDD in \eqref{eq:BadData} with $r_{k,i} \sim \mathcal{N}(0,\sigma_i^2)$ is solved by $\tau_i^B = \sqrt{2}\sigma_i \mathrm{erf}^{-1}(1-\alpha_i^{\text{des}})$ where $\mathrm{erf}^{-1}(\cdot)$ is the \textit{inverse error function}, resulting in $\alpha_i^B \sim \alpha_i^{\text{des}}$.

The second well-known boundary detector that we consider is the \textit{CUmulative SUM} (CUSUM), which has been shown to have tighter bounds on attack detection than the BDD \cite{CUSUM1}. The CUSUM leverages the absolute value of the residual in the detection procedure and is solved by

\vspace{3pt}
\centerline{\textbf{CUSUM Detector Procedure}}
\vspace{2pt}
\hrule
\vspace{-3pt}
  \begin{equation}
  \label{pro:CUSUM}
      \begin{array}{ll}
        \hspace*{-7pt} \textbf{Initialize } S_{1,i} = 0, \text{ } i \in \mathcal{S}, & \\
        \hspace*{-7pt} S_{k,i} = \max(0,S_{k-1,i}+|r_{k,i}| \hspace*{-1pt}- \hspace*{-1pt} b_i), & \hspace*{-5pt} \textbf{if } S_{k-1,i} \leq \tau^C_i, \\
        \hspace*{-7pt} S_{k,i} = 0 \text{ and Alarm } \psi_{k,i}^C = 1, & \hspace*{-5pt} \textbf{if } S_{k-1,i} > \tau^C_i.
      \end{array}
  \end{equation}
  \vspace{-4pt}
  \hrule
  \vspace{5pt}
  
The working principle of of this detector is to accumulate the residual sequence in $S_{k,i}$, triggering an alarm $\psi_{k,i}^C = 1$ when the test variable surpasses the threshold $\tau^C_i$. A detailed explanation of how to tune threshold $\tau^C_i$ given a bias $b_i$ for a desired false alarm rate $\alpha_i^{\text{des}}$ can be found in \cite{CUSUM1}.

\subsection{State Deviation under Worst-case Stealthy Attacks}
\label{sec:StealthyAttack}

We consider the reference tracking feedback controller
\vspace{-1pt}
\begin{equation}
\label{eq:controller}
    \bm{u}_k = \bm{K}\hat{\bm{x}}_k + \bm{k}_r\bm{x}_{k}^{\text{ref}},
    \vspace{-2pt}
\end{equation}
where $\bm{K} \in \R^{s \times n}$ is the state feedback control gain matrix, $\bm{k}_r~\in~\R^{m \times m}$ is a reference gain for output tracking, $\bm{x}_{k}^{\text{ref}}$ is the reference state, and $\hat{\bm{x}}_k$ is the KF state estimate from \eqref{eq:Kalman}-\eqref{eq:SteadyState_P_K}. Choosing a suitable $\bm{K}$ such that $(\bm{A} + \bm{B}\bm{K})$ is stable (i.e. $\rho[\bm{A} + \bm{B}\bm{K}] < 1$, where $\rho[\cdot]$ is the spectral radius) and $(\bm{A},\bm{C})$ is assumed stabilizable, the closed-loop system can be written within terms of the KF estimation error as
\vspace{-2pt}
\begin{equation}
\label{eq:closed_loop_system}
    \begin{array}{l}
    \hspace*{-2pt} \bm{x}_{k+1} \hspace*{-2pt} = \hspace*{-2pt} (\bm{A} \hspace*{-1pt} + \hspace*{-1pt} \bm{B}\bm{K})\bm{x}_{k} \hspace*{-1pt}  + \hspace*{-1pt} \bm{B}\bm{k}_r\bm{x}_{k}^{\text{ref}} \hspace*{-1pt} - \hspace*{-1pt} \bm{B}\bm{K}\bm{e}_k \hspace*{-1pt} + \hspace*{-1pt} \bm{\nu}_k, \\
    \hspace*{-2pt} \bm{e}_{k+1} \hspace*{-2pt} = \hspace*{-2pt} (\bm{A} \hspace*{-1pt} - \hspace*{-1pt} \bm{L}\bm{C})\bm{e}_{k} - \bm{L}(\bm{\xi}_k + \bm{\eta}_k) + \bm{\nu}_k.
    \end{array}
    \vspace{-2pt}
\end{equation}
As an attacker injects signals into the measurements (i.e. $\bm{\xi} \ne \bm{0}$), system dynamics are indirectly affected via the interconnected term $\bm{B}\bm{K}\bm{e}_k$ from the estimation error dynamics.

In the remaining of this section we describe the maximum damage that can occur due to worst-case scenario stealthy sensor attacks. We assume the attacker has perfect knowledge of system dynamics, detection procedures, and state estimation. The objective of an attacker is to cause maximum damage to the system state by injecting attack signals $\bm{\xi}_k$ to measurements while also remaining undetected. With only the BDD implemented, the effects of a worst-case scenario attack while not triggering an alarm can be derived from \eqref{eq:Residual} and \eqref{eq:BadData} with a sustained attack signal
\vspace{-3pt}
\begin{equation}
\label{eq:BD_attack1}
    \xi_{k,i} = -\bm{C}_i\bm{e}_k - \eta_{k,i} + \tau_i^B,
    \vspace{-2pt}
\end{equation}
causing the residual $|r_{k,i}| \hspace*{-1pt} = \hspace*{-1pt} \tau_i^B $ to not trigger the BDD alarm.

Now considering CUSUM as a stand-alone detector, an adversarial wants to avoid attack vectors such that the monitoring test variable exceeds threshold $\tau^C_i$, thereby causing a reset $S_{k,i} = 0, \text{ if } S_{k-1,i} > \tau^C_i$ in \eqref{pro:CUSUM} by satisfying the CUSUM procedure sequence $S_{k,i} = \max(0,S_{k-1,i}+|\bm{C}_i\bm{e}_k + \eta_{k,i} + \xi_{k,i}|-b_i) \leq \tau^C_i$ if $S_{k-1,i} \leq \tau^C_i$. For maximum effect on state deviation, the attacker saturates the CUSUM test statistic such that $S_{k,i} = \tau^C_i$ to achieve no alarm sequences. Here we define a saturation as follows:
\begin{definition}
\textit{Saturation} of a boundary detector is defined as the maximum allowable attack signal to force the residual to, but without exceeding, the detector threshold.
\end{definition}

Beginning at a time $k$, an attacker immediately saturates $S_{k,i}$ with the attack signal,
\vspace{-3pt}
\begin{equation}
\label{eq:CUSUM_attackseq1}
    \xi_{k,i} = - \bm{C}_i\bm{e}_k - \eta_{k,i} + b_i -S_{k-1,i} + \tau^C_i,
    \vspace{-3pt}
\end{equation}
followed by
\begin{equation}
\label{eq:CUSUM_attackseq2}
    \hspace*{-63pt} \xi_{k,i} = - \bm{C}_i\bm{e}_k - \eta_{k,i} + b_i.
    \vspace{-3pt}
\end{equation}
for all future time instances to hold $S_{k,i}$ at threshold $\tau^C_i$.

With the Randomness Monitor augmented with either BDD or CUSUM, an attacker can no longer hold an attack sequence to one side as described in attacks \eqref{eq:BD_attack1}-\eqref{eq:CUSUM_attackseq2}. Rather, an attacker is forced to create an attack sequence such that $r_{k,i}$ alternates residual signs if it wants to avoid triggering alarms for both the WSR and SIR tests. The most effective attack for maximum state deviation with our augmented framework is to \textit{saturate} the boundary detector as often as possible, while leaving the remaining attack signals with an opposite sign with respect to the saturating attacks to force the residual to be as close as possible to zero.

From the WSR test, given a monitoring window $\ell$, the minimum number of \textit{non-saturating} attack signals $\xi_{k,i}$ to not trigger an alarm $\psi_{k,i}^W$ is
\vspace{-4pt}
\begin{equation}
\label{eq:min_sat_attack1}
    \hspace*{-3pt} \gamma_i^{\ell} \hspace*{-1pt} = \hspace*{-1pt} \min_{\ell^j} \bigg( \sum_{rank=1}^{\ell^j} \hspace*{-2pt} rank \bigg) \bigg| \sum_{rank=1}^{\ell^j} \hspace*{-2pt} rank > \min(W_i^{\pm}),
    \vspace{-2pt}
\end{equation}
in which $\ell^j \in \mathcal{L} = (1, \dots, \ell)$ and $\mathcal{L}$ is the set of all $ranks$ as introduced in Section~\ref{sec:Wilcoxon}. From \eqref{eq:min_sat_attack1}, we can then find the maximum number of \textit{saturating} attack signals by $ \beta_i^{\ell} = \ell - \gamma_i^{\ell}$.

\begin{proposition}
The maximum allowable saturating attack signal converges to $\lim_{\ell \to \infty} \frac{\beta_i^{\ell}}{\ell} = 1 - \frac{\sqrt{2}}{2} \approx .293$ for any $\alpha_i^{\text{des}}$ as shown by the dashed black line in Fig \ref{fig:SaturationWindow}.
\end{proposition}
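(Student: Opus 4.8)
The plan is to reduce the proposition to a leading-order balance between the minority rank sum realized by the worst-case attack and the no-alarm lower bound $\Omega^W_-$ of Lemma~\ref{lem:W_Lemma}, and then to show that the only $\alpha_i^{\text{des}}$-dependent contribution is asymptotically negligible against the dominant $\Theta(\ell^2)$ term.

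First I would pin down which of $W_{k,i}^{+},W_{k,i}^{-}$ equals $\min(W_{k,i}^{\pm})$. Since the $\beta_i^\ell$ saturating attacks drive $|r_{k,i}|$ to the detector threshold, they generate the largest-magnitude residuals and hence occupy the top ranks, whereas the $\gamma_i^\ell=\ell-\beta_i^\ell$ non-saturating attacks are designed to push $r_{k,i}$ toward zero and therefore take the smallest ranks $\{1,\dots,\gamma_i^\ell\}$, all carrying the opposite sign. The non-saturating side thus realizes the smaller sum of ranks, so $\min(W_{k,i}^{\pm})=\sum_{rank=1}^{\gamma_i^\ell} rank = \gamma_i^\ell(\gamma_i^\ell+1)/2$. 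By Lemma~\ref{lem:W_Lemma}, evading the WSR alarm requires $\min(W_{k,i}^{\pm})\geq\Omega^W_-$, so the $\gamma_i^\ell$ of \eqref{eq:min_sat_attack1} is the smallest integer satisfying $\gamma_i^\ell(\gamma_i^\ell+1)/2\geq\Omega^W_-$; the ceiling slack incurred in this rounding is $O(\gamma_i^\ell)=O(\ell)$.

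Next I would expand $\Omega^W_-$ from \eqref{eq:W_lemma} as $\ell\to\infty$. The mean term $(\ell^2+\ell)/4$ is $\Theta(\ell^2)$, while the standard-deviation term $|\Phi^{-1}(\alpha_i^{\text{des}}/2)|\sqrt{(\ell^2+\ell)(2\ell+1)/24}$ is only $\Theta(\ell^{3/2})$; hence $\Omega^W_-=\ell^2/4+O(\ell^{3/2})$. Combining with the previous step, $\gamma_i^\ell(\gamma_i^\ell+1)/2=\ell^2/4+O(\ell^{3/2})$, and since the left side equals $(\gamma_i^\ell)^2/2+O(\gamma_i^\ell)$, I obtain $(\gamma_i^\ell)^2=\ell^2/2+o(\ell^2)$, i.e. $\gamma_i^\ell/\ell\to 1/\sqrt{2}=\sqrt{2}/2$. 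The claim then follows from $\beta_i^\ell=\ell-\gamma_i^\ell$, giving $\lim_{\ell\to\infty}\beta_i^\ell/\ell = 1-\sqrt{2}/2\approx .293$.

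The main obstacle is the careful bookkeeping of the lower-order terms: I must verify that both the $\Theta(\ell^{3/2})$ standard-deviation contribution and the $O(\ell)$ integrality/ceiling slack from the definition of $\gamma_i^\ell$ are genuinely $o(\ell^2)$, so that they vanish after dividing by $\ell^2$ and the limit is exactly $1/\sqrt{2}$. This same estimate simultaneously yields the asserted independence from $\alpha_i^{\text{des}}$, because $\alpha_i^{\text{des}}$ enters $\Omega^W_-$ only through the $\Theta(\ell^{3/2})$ term and therefore cannot affect the $\Theta(\ell^2)$ leading order that fixes the ratio.
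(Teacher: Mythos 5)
Your proof is correct; note, though, that the paper itself gives no proof of this proposition --- it is stated bare and supported only by the numerical curves in Fig.~\ref{fig:SaturationWindow} --- so your derivation supplies an analytical argument the paper leaves implicit. Your route is the natural formalization of the paper's own construction: the worst-case attack places the $\gamma_i^{\ell}$ non-saturating residuals at the smallest magnitudes with one sign and the $\beta_i^{\ell}$ saturating residuals at the detector threshold with the opposite sign, so the minority rank sum is $\gamma_i^{\ell}(\gamma_i^{\ell}+1)/2$, and Lemma~\ref{lem:W_Lemma} forces $\gamma_i^{\ell}(\gamma_i^{\ell}+1)/2 \geq \Omega^W_-$. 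Your order-of-magnitude bookkeeping is sound: $\Omega^W_- = \ell^2/4 + \Theta(\ell^{3/2})$, the integrality slack is $O(\ell)$, both are $o(\ell^2)$, hence $\gamma_i^{\ell}/\ell \to 1/\sqrt{2}$ and $\beta_i^{\ell}/\ell \to 1 - \sqrt{2}/2$, with independence from $\alpha_i^{\text{des}}$ falling out exactly because $\alpha_i^{\text{des}}$ multiplies only the $\Theta(\ell^{3/2})$ term. Two small additions would make it airtight. First, Lemma~\ref{lem:W_Lemma} constrains both rank sums, but since $W^+_{k,i} + W^-_{k,i} = \ell(\ell+1)/2 = \Omega^W_- + \Omega^W_+$, the upper condition $\max(W^{\pm}_{k,i}) \leq \Omega^W_+$ is equivalent to the lower one $\min(W^{\pm}_{k,i}) \geq \Omega^W_-$, so working only with the lower constraint loses nothing. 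Second, your assertion that the non-saturating side realizes the minimum should be closed explicitly: the minimal feasible $\gamma_i^{\ell}$ keeps its rank sum within $O(\ell)$ of $\Omega^W_-$, which stays strictly below the mean $(\ell^2+\ell)/4$ because the $\Theta(\ell^{3/2})$ gap dominates the $O(\ell)$ slack; hence $\min(W^{\pm}_{k,i})$ is indeed $\gamma_i^{\ell}(\gamma_i^{\ell}+1)/2$ in the regime that matters, and the reduction to a single inequality is justified.
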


\vspace{-2pt}
\begin{figure}[thb!]
\centering
\includegraphics[width=0.46\textwidth]{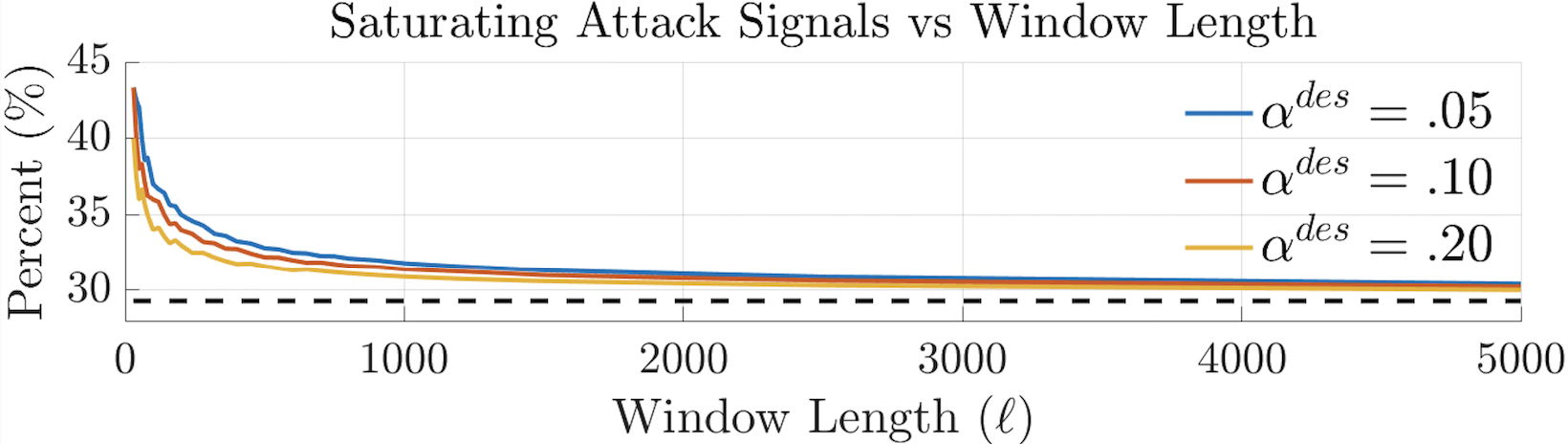}
\vspace{-5pt}
\caption{Allowable percentage of saturating attack signals of given windows lengths for different desired alarm rates $\alpha^{\text{des}}$.}
\label{fig:SaturationWindow}
\vspace{-3pt}
\end{figure}

To this point, we have discussed worst-case scenario attack sequences causing saturation of the test variable (in this paper BDD and CUSUM) to maximize the effect of the attack. However, from Remark \ref{rem:serial} in Section \ref{sec:Serial Test}, a special case to satisfy requirements of the SIR test is when two consecutive residuals of same value triggers an alarm $\psi_{k,i}^S~=~1$. To work around this issue, a stealthy attacker with perfect knowledge of the SIR test can include a small uniformly random number to the attack signal $\xi_{k,i}$ denoted by $\delta_{k,i} \sim \mathcal{U}(0,\epsilon)$ where $\epsilon \in \R^{+}$ is infinitesimally small and $\mathrm{E}[\delta_{k,i}] = \frac{\epsilon}{2} \approx 0$.  Thus, the worst-case scenario with the Randomness Monitor augmented to the BDD follows
\vspace{-1pt}
\begin{equation}
\begin{split}
    \label{eq:BD_attack_updated}
    \hspace*{-8pt} \bigg\{ \begin{array}{ll}
	\hspace*{-3pt} \xi_{k,i} \hspace*{-1pt}=\hspace*{-1pt} -\bm{C}_i\bm{e}_k - \eta_{k,i} + \tau_i^B - \delta_{k,i}, & \hspace*{-6pt} \textbf{if } \text{saturating}, \\
    \hspace*{-3pt} \xi_{k,i} \hspace*{-1pt}=\hspace*{-1pt} -\bm{C}_i\bm{e}_k - \eta_{k,i} - \delta_{k,i}, & \hspace*{-6pt} \textbf{if } \text{non-saturating},
    \end{array}
\end{split}
\vspace{-1pt}
\end{equation} 
in order to not trigger an alarm. Similarly, but with the CUSUM detector, an undetectable attack sequence follows
\vspace{-1pt}
\begin{equation}
\begin{split}
    \label{eq:CUSUM_attack_updated}
    \hspace*{-6pt} \Bigg\{ \begin{array}{ll}
	\begin{array}{l} \begin{split} \hspace*{-8pt} \xi_{k,i} = &-S_{k-1,i} - \bm{C}_i\bm{e}_k \\ &- \eta_{k,i} + b_i + \tau^C_i - \delta_{k,i},  \end{split} \end{array} & \hspace*{-5pt} \textbf{if } \text{saturating}, \\
    \hspace*{-3pt} \xi_{k,i} = - \bm{C}_i\bm{e}_k - \eta_{k,i} + b_i - \delta_{k,i}, & \hspace*{-5pt} \textbf{if } \text{non-saturating}.
    \end{array}
\end{split}
\vspace{-1pt}
\end{equation}

Given the alternating signed sequence of residuals over the monitoring window, the expected value of $r_{k,i}$ under worst-case scenario stealthy attacks is denoted as
\vspace{-1pt}
\begin{equation}
\begin{split}
    \label{eq:Expected_Residual}
    \hspace*{-7pt} \bigg\{ \begin{array}{ll}
	\hspace*{-3pt} \mathrm{E}[r^B_{k,i}] =  \tau_i^B(\frac{\beta_i^{\ell}}{\ell} - \delta_{k,i}) \approx \tau_i^B\frac{\beta_i^{\ell}}{\ell}, & \hspace*{-5pt} \textbf{for } \text{Bad-Data}, \\
    \hspace*{-3pt} \mathrm{E}[r^C_{k,i}] = \tau^C_i(\frac{\beta_i^{\ell}}{\ell} - \delta_{k,i}) \approx \tau^C_i\frac{\beta_i^{\ell}}{\ell}, & \hspace*{-5pt} \textbf{for } \text{CUSUM}.
    \end{array}
\end{split}
\vspace{-1pt}
\end{equation}

With our framework augmented to the BDD, the expected value of the residual sequence is described as $\mathrm{E}[\bm{r}^B_k] = (\mathrm{E}[r^B_{k,1}], \dots, \mathrm{E}[r^B_{k,s}])^T$ and the expectation of the closed-loop system \eqref{eq:closed_loop_system} under attack \eqref{eq:BD_attack_updated} results in
\begin{equation}
\label{eq:BD_closed_loop}
    \begin{array}{l}
    \begin{aligned}
    \mathrm{E}[\bm{x}_{k+1}] &= (\bm{A} + \bm{B}\bm{K})\mathrm{E}[\bm{x}_{k}] - \bm{B}\bm{K}\mathrm{E}[\bm{e}_k],  \\
    \mathrm{E}[\bm{e}_{k+1}] &= \bm{A}\mathrm{E}[\bm{e}_{k}] - \bm{L}\mathrm{E}[\bm{r}^B_k].
    \end{aligned}
    \end{array}
    \vspace{-3pt}
\end{equation}
Note, in \eqref{eq:BD_closed_loop}, the reference signal term $\bm{B}\bm{k}_r\bm{x}_{k}^{\text{ref}}$ from \eqref{eq:closed_loop_system} has been removed as we are interested in the expected state deviation under an attack. It is clear that if $\rho[\bm{A}]>1$ and $\mathrm{E}[\bm{r}_k^{B}] \ne \bm{0}$ then the expectation of the estimation error $\mathrm{E}[\bm{e}_{k}]$ for destabilized states diverge to infinity as $k \to \infty$ (depending on algebraic properties of $\bm{A}$), indirectly causing these states within $\mathrm{E}[\bm{x}_{k}]$ to also diverge unbounded. 
\begin{lemma}
\label{lemma3} 
    Considering a closed-loop system from \eqref{eq:system1} and \eqref{eq:BD_closed_loop}, where $\rho[\bm{A}] < 1$ and attack sequence in \eqref{eq:BD_attack_updated}, the limit for expected state divergence $\lim_{k \to \infty}\mathrm{E}[\bm{x}_k]= \Delta^{B}$ is
    \vspace{-2pt}
    \begin{equation}
    \label{eq:max_dev_BD}
        \Delta^{B} = (\bm{I}-\bm{A}-\bm{BK})^{-1}\bm{BK}(\bm{I}-\bm{A})^{-1}\bm{L}\mathrm{E}[\bm{r}^B_k].
        \vspace{2pt}
    \end{equation}
\end{lemma}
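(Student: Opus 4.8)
The plan is to treat the expected closed-loop dynamics in \eqref{eq:BD_closed_loop} as a cascade (block-triangular) linear system driven by the constant vector $\mathrm{E}[\bm{r}^B_k]$, whose entries $\tau_i^B\beta_i^{\ell}/\ell$ from \eqref{eq:Expected_Residual} are time-invariant under the sustained worst-case attack. Since the error recursion $\mathrm{E}[\bm{e}_{k+1}]=\bm{A}\mathrm{E}[\bm{e}_k]-\bm{L}\mathrm{E}[\bm{r}^B_k]$ does not depend on $\mathrm{E}[\bm{x}_k]$, I would first solve it in isolation and then feed its limit into the state recursion.

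First, I would analyze the error dynamics. Because $\rho[\bm{A}]<1$, the homogeneous solution $\bm{A}^k\mathrm{E}[\bm{e}_0]$ decays to zero and the forced response converges to the unique fixed point $\bm{e}^\star$ solving $(\bm{I}-\bm{A})\bm{e}^\star=-\bm{L}\mathrm{E}[\bm{r}^B_k]$, i.e. $\lim_{k\to\infty}\mathrm{E}[\bm{e}_k]=\bm{e}^\star=-(\bm{I}-\bm{A})^{-1}\bm{L}\mathrm{E}[\bm{r}^B_k]$; invertibility of $\bm{I}-\bm{A}$ is guaranteed since $\rho[\bm{A}]<1$ precludes $1$ from being an eigenvalue.

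Second, I would substitute this limit into the state recursion $\mathrm{E}[\bm{x}_{k+1}]=(\bm{A}+\bm{B}\bm{K})\mathrm{E}[\bm{x}_k]-\bm{B}\bm{K}\mathrm{E}[\bm{e}_k]$. Using the earlier assumption that $\bm{K}$ renders $\bm{A}+\bm{B}\bm{K}$ Schur-stable ($\rho[\bm{A}+\bm{B}\bm{K}]<1$), the state fixed point $\bm{x}^\star$ solves $(\bm{I}-\bm{A}-\bm{B}\bm{K})\bm{x}^\star=-\bm{B}\bm{K}\bm{e}^\star$, and composing the two fixed points yields exactly $\Delta^B=(\bm{I}-\bm{A}-\bm{B}\bm{K})^{-1}\bm{B}\bm{K}(\bm{I}-\bm{A})^{-1}\bm{L}\mathrm{E}[\bm{r}^B_k]$, as claimed.

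The main obstacle I anticipate is rigor in the second step: the forcing term $-\bm{B}\bm{K}\mathrm{E}[\bm{e}_k]$ is only asymptotically constant, not constant for finite $k$, so a bare fixed-point argument is not immediately valid. I would close this gap by writing $\mathrm{E}[\bm{e}_k]=\bm{e}^\star+\tilde{\bm{e}}_k$ with $\tilde{\bm{e}}_k=\bm{A}^k(\mathrm{E}[\bm{e}_0]-\bm{e}^\star)\to\bm{0}$ exponentially, expressing $\mathrm{E}[\bm{x}_k]$ via its explicit convolution sum, and showing that the transient contribution $\sum_{j}(\bm{A}+\bm{B}\bm{K})^{k-1-j}\bm{B}\bm{K}\tilde{\bm{e}}_j$ vanishes because it is the convolution of two exponentially decaying sequences (both spectral radii being below one). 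The remaining steady part then reduces to the fixed-point equation above, confirming the stated limit $\Delta^B$.
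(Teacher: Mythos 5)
Your proposal is correct and follows essentially the same route as the paper's proof: both exploit the block-triangular structure of \eqref{eq:BD_closed_loop}, use $\rho[\bm{A}]<1$ and $\rho[\bm{A}+\bm{B}\bm{K}]<1$ to guarantee invertibility of $(\bm{I}-\bm{A})$ and $(\bm{I}-\bm{A}-\bm{B}\bm{K})$, and identify $\Delta^{B}$ as the composition of the error and state fixed points. The only difference is bookkeeping: where you solve the cascade sequentially and control the transient via a convolution-of-two-decaying-sequences argument, the paper subtracts both equilibria simultaneously so that the deviation dynamics $(\mathrm{E}[\bm{x}_k]-\mathrm{E}[\bm{x}_\infty],\,\mathrm{E}[\bm{e}_k]-\mathrm{E}[\bm{e}_\infty])$ form an autonomous Schur-stable linear system, which resolves your ``asymptotically constant forcing'' concern automatically.
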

\begin{proof}
\label{proof3}
    Assuming both $\rho[\bm{A}] < 1$ and $\rho[\bm{A} + \bm{BK}] < 1$ are satisfied, signifying the invertibility of $(\bm{I}- \bm{A})$ and $(\bm{I}- \bm{A}- \bm{BK})$ in \eqref{eq:max_dev_BD}, an expected equilibrium is reached as $k \to \infty$ by $\mathrm{E}[\bm{x}_{\infty}] = (\bm{I} - \bm{A} - \bm{BK})^{-1}\bm{BK}(\bm{I} - \bm{A})^{-1} \bm{L}\mathrm{E}[\bm{r}_k^{B}]$ and $\mathrm{E}[\bm{e}_{\infty}] = (\bm{I} -\bm{A})^{-1}\bm{L}\mathrm{E}[\bm{r}_k^{B}]$ such that the evolution of \eqref{eq:BD_closed_loop} with the expected differences $\mathrm{E}[\bm{x}_k]- \mathrm{E}[\bm{x}_{\infty}]$ and $\mathrm{E}[\bm{e}_k]- \mathrm{E}[\bm{e}_{\infty}]$ is described by 
    \vspace{-3pt}
    \begin{equation}
    \label{eq:proof_max_dev_BD2}
    \begin{split}
        \hspace*{-3pt} \mathrm{E}[\bm{x}_{k+1}]-\mathrm{E}[\bm{x}_{\infty}] = & \hspace*{3pt} (\bm{A}+\bm{BK})(\mathrm{E}[\bm{x}_{k}]-\mathrm{E}[\bm{x}_{\infty}]) \\ &- \bm{BK}(\mathrm{E}[\bm{e}_{k}]-\mathrm{E}[\bm{e}_{\infty}]), \\
        \vspace{-1pt}
        \mathrm{E}[\bm{e}_{k+1}]-\mathrm{E}[\bm{e}_{\infty}] = & \hspace*{3pt} \bm{A} \mathrm{E}[\bm{e}_{k}]-\mathrm{E}[\bm{e}_{\infty}], 
    \end{split}
    \vspace{-4pt}
    \end{equation}
    are asymptotically stable i.e., $\lim_{k \to \infty}(\mathrm{E}[\bm{x}_{k+1}]-\mathrm{E}[\bm{x}_{\infty}]) = \bm{0}$ and $\lim_{k \to \infty}(\mathrm{E}[\bm{e}_{k+1}]-\mathrm{E}[\bm{e}_{\infty}]) = \bm{0}$, therefore concluding the proof.
\end{proof}

Similarly, with the Randomness Monitor augmented to CUSUM, the expected closed-loop system evolution under attack sequence \eqref{eq:CUSUM_attack_updated} is described by
\vspace{-3pt}
\begin{equation}
\label{eq:CUSUM_closed_loop}
    \begin{array}{l}
    \begin{aligned}
    \hspace*{-1pt} \mathrm{E}[\bm{x}_{k+1}] &= (\bm{A} + \bm{B}\bm{K})\mathrm{E}[\bm{x}_{k}] - \bm{B}\bm{K}\mathrm{E}[\bm{e}_k],  \\
    \hspace*{-1pt} \mathrm{E}[\bm{e}_{k+1}] &= \bm{A}\mathrm{E}[\bm{e}_{k}] - \bm{L}\mathrm{E}[\bm{r}^C_{k}].
    \end{aligned}
    \end{array}
    \vspace{-3pt}
\end{equation}
where $\mathrm{E}[\bm{r}^C_k] = (\mathrm{E}[r^C_{k,1}], \dots, \mathrm{E}[r^C_{k,s}])^T$ is the expected value of the residual sequence vector for CUSUM in \eqref{eq:Expected_Residual}.
\begin{lemma}
\label{lemma4}
    Considering a closed-loop system from \eqref{eq:system1} and \eqref{eq:CUSUM_closed_loop}, where $\rho[\bm{A}]~<~1$ and attack sequence in \eqref{eq:CUSUM_attack_updated}, the limit for expected state divergence $\lim_{k \to \infty}\mathrm{E}[\bm{x}_k]= \Delta^{C}$ is
    \vspace{-2pt}
    \begin{equation}
    \label{eq:max_dev_CUSUM}
        \Delta^{C} = (\bm{I}-\bm{A}-\bm{BK})^{-1}\bm{BK}(\bm{I}-\bm{A})^{-1}\bm{L}\mathrm{E}[\bm{r}^C_k]. 
    \end{equation}
\end{lemma}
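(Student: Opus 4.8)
The plan is to mirror the proof of Lemma~\ref{lemma3} essentially verbatim, since the CUSUM closed-loop dynamics in \eqref{eq:CUSUM_closed_loop} have exactly the same structural form as the Bad-Data dynamics in \eqref{eq:BD_closed_loop}, differing only in that the driving term $\mathrm{E}[\bm{r}^B_k]$ is replaced by $\mathrm{E}[\bm{r}^C_k]$. The first thing I would establish is that $\mathrm{E}[\bm{r}^C_k]$ is a time-invariant (constant) vector, so that a genuine fixed point of the recursion exists: this follows from \eqref{eq:Expected_Residual}, where $\mathrm{E}[r^C_{k,i}] \approx \tau^C_i \beta_i^{\ell}/\ell$ depends only on the saturation fraction and the (fixed) threshold, not on $k$. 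With this in hand, the same assumptions $\rho[\bm{A}]<1$ and $\rho[\bm{A}+\bm{BK}]<1$ guarantee invertibility of $(\bm{I}-\bm{A})$ and $(\bm{I}-\bm{A}-\bm{BK})$.

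Next I would locate the equilibrium by setting successive iterates equal in \eqref{eq:CUSUM_closed_loop}. Imposing $\mathrm{E}[\bm{e}_{k+1}]=\mathrm{E}[\bm{e}_k]=\mathrm{E}[\bm{e}_\infty]$ in the error equation yields $(\bm{I}-\bm{A})\mathrm{E}[\bm{e}_\infty] = -\bm{L}\mathrm{E}[\bm{r}^C_k]$, so $\mathrm{E}[\bm{e}_\infty] = -(\bm{I}-\bm{A})^{-1}\bm{L}\mathrm{E}[\bm{r}^C_k]$. Substituting this into the stationary state equation $(\bm{I}-\bm{A}-\bm{BK})\mathrm{E}[\bm{x}_\infty] = -\bm{BK}\mathrm{E}[\bm{e}_\infty]$ and inverting gives $\mathrm{E}[\bm{x}_\infty] = (\bm{I}-\bm{A}-\bm{BK})^{-1}\bm{BK}(\bm{I}-\bm{A})^{-1}\bm{L}\mathrm{E}[\bm{r}^C_k]$, which is precisely $\Delta^{C}$ in \eqref{eq:max_dev_CUSUM}.

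To show this fixed point is actually reached, I would change coordinates to the deviations $\mathrm{E}[\bm{x}_k]-\mathrm{E}[\bm{x}_\infty]$ and $\mathrm{E}[\bm{e}_k]-\mathrm{E}[\bm{e}_\infty]$. Subtracting the stationary equations from \eqref{eq:CUSUM_closed_loop} cancels the constant forcing term $\bm{L}\mathrm{E}[\bm{r}^C_k]$ and leaves a homogeneous recursion in which the error deviation evolves under $\bm{A}$ and the state deviation under $(\bm{A}+\bm{BK})$ (driven by the decaying error deviation). Since $\rho[\bm{A}]<1$ and $\rho[\bm{A}+\bm{BK}]<1$, both deviations are asymptotically stable and tend to $\bm{0}$ as $k\to\infty$, whence $\lim_{k\to\infty}\mathrm{E}[\bm{x}_k]=\mathrm{E}[\bm{x}_\infty]=\Delta^{C}$.

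I expect no genuine obstacle here: the argument is a direct transfer of Lemma~\ref{lemma3}, and the only CUSUM-specific ingredient is confirming that the saturation/non-saturation attack in \eqref{eq:CUSUM_attack_updated} produces the constant expected residual recorded in \eqref{eq:Expected_Residual}. The one subtlety worth a sentence is that this time-invariance of $\mathrm{E}[\bm{r}^C_k]$ is what legitimizes treating the system as an affine recursion with a well-defined equilibrium; everything downstream is then identical to the Bad-Data case.
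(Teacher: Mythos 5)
Your proposal is correct and follows exactly the route the paper intends: the paper's own ``proof'' of Lemma~\ref{lemma4} is a one-line deferral to the proof of Lemma~\ref{lemma3}, and your argument is precisely that proof transplanted to \eqref{eq:CUSUM_closed_loop}, with the worthwhile addition of explicitly noting that \eqref{eq:Expected_Residual} makes $\mathrm{E}[\bm{r}^C_k]$ constant so the affine recursion has a genuine fixed point. In fact your equilibrium computation is internally more consistent than the paper's: you correctly obtain $\mathrm{E}[\bm{e}_\infty]=-(\bm{I}-\bm{A})^{-1}\bm{L}\mathrm{E}[\bm{r}^C_k]$, whereas the paper's Lemma~\ref{lemma3} proof states this intermediate quantity with the opposite sign (a typo that cancels in the final expression for $\Delta^{B}$).
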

\begin{proof}
\label{proof4}
    The proof is omitted here due to space constraints but follows the proof for Lemma \ref{lemma3}.
\end{proof}
\vspace{-1pt}

\end{section}
\begin{section}{Results} \label{sec:Results}
The proposed Randomness Monitor framework was validated in simulation and experiments and compared to state-of-the-art detection techniques introduced in Section~\ref{sec:StealthyAttack}. The case study presented in this paper is an autonomous waypoint navigation of a skid-steering differential-drive UGV with the following linearized model \cite{vehiclemodel}
\begin{equation}
\begin{split}
\label{eq:UGV_dynamics}
    \dot{v} &= \frac{1}{m}(F_l+F_r-B_rv), \\
    \dot{\omega} &= \frac{1}{I_z}\Big(\frac{w}{2}(F_l-F_r)-B_l\omega \Big), \text{ } \dot{\theta} = \omega,
\end{split}
\end{equation}
where $v$ is the velocity, $\theta$ is the vehicle's heading angle, and $\omega$ its angular velocity, forming the state vector $\bm{x} = [v,\theta,\omega]^T$. $F_l$ and $F_r$ describe the left and right input forces from the wheels, $w$ is the vehicle width, while $B_r$ and $B_l$ are resistances due to the wheels rolling and turning. The continuous-time model \eqref{eq:UGV_dynamics} is discretized with a sampling rate $t_s = 0.05$ to satisfy the system model described in \eqref{eq:system1}.

In both simulation and experiment we perform two different attack sequences: \textit{Attack~\#1} where a stealthy attack sequence concentrates the residual distribution with a non-zero mean and smaller variance whereas \textit{Attack~\#2} creates a signed pattern sequence \{+, +, +, -\} of residual differences $r'_{k,i}$. Both attacks are chosen to not increase the boundary detector alarm rate.
\vspace{-2pt}

\subsection{Simulations}
\label{sec:simulation}

Considering the UGV system model \eqref{eq:UGV_dynamics} in our case study, we show the effect of stealthy attacks on the velocity sensor on state $x_1$ with a monitoring window length $\ell = 100$. Table \ref{tab:Sim_results} gives the resulting alarm rates when our framework is augmented to boundary detectors (BDD and CUSUM) with all detectors tuned for desired false alarm rates $\alpha^{\text{des}} \in \{.05,.20\}$ and in separate simulations we show the alarm rate for \textit{No Attack}, \textit{Attack~\#1}, and \textit{Attack~\#2}. As expected, with no attacks present, all alarm rates converge approximately to the desired false alarm rate $\alpha_1^{\text{des}}$. Under \textit{Attack \#1}, alarm rates for only the WSR increase to higher values and similarly the \textit{Attack \#2} pattern
gives an increased alarm rate to only the SIR test. We should note that the window length $\ell$ results in different behaviors: short window lengths result in faster responses, while longer window lengths react slower but are able to detect more hidden attacks exhibiting non-random behavior than a monitor with a short window length. Fig.~\ref{fig:ExpectedStateDev} demonstrates attacks on the velocity sensor where our detectors are tuned for $\alpha_1^{\text{des}} = 0.10$ and compared with the CUSUM boundary detector. \textit{Attack~\#1} occurs between ($50,125$)s triggering the WSR test, \textit{Attack~\#2} between ($175,250$)s triggering the SIR test, and from $300$s a third attack satisfying bounds for both randomness tests but violating the CUSUM test is presented. Velocity is reduced as expected according to \eqref{eq:closed_loop_system} while experiencing the effects of each attack.
\vspace{-5pt}

\begin{table}[b!ht]
  \caption{Simulated Alarm Rates}
  \vspace{-8pt}
  \label{tab:Sim_results}
  \includegraphics[width=1\linewidth]{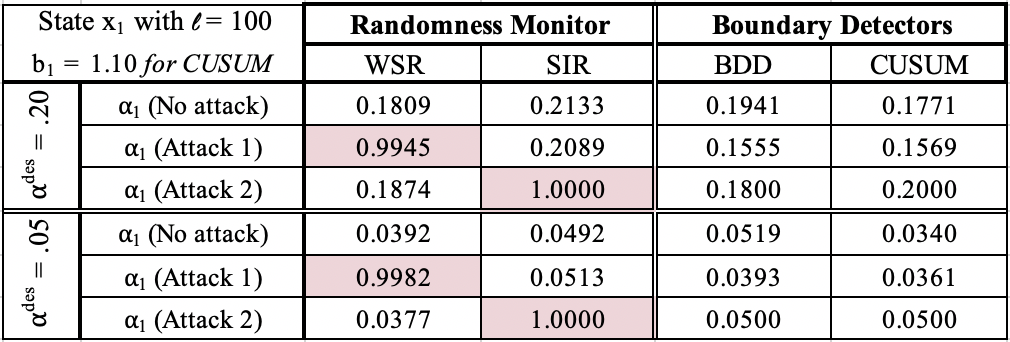}
\end{table}
\vspace{-10pt}

\begin{figure}[htb!]
\centering
\includegraphics[width=1\linewidth]{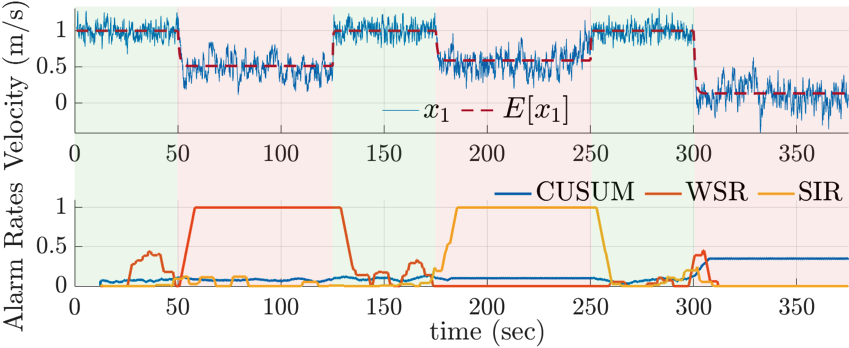}
\vspace{-17pt}
\caption{State deviation under various attacks and alarm rates over a moving window of the past $100$ time steps.}
\label{fig:ExpectedStateDev}
\vspace{-1pt}
\end{figure}

\subsection{Experiments}
\label{sec:experiment}

In this section we present a case study for a UGV performing way-point navigation under stealthy sensor attacks. For our case, the UGV travels to a series of goal positions while avoiding a restricted area with a desired cruise velocity $v^{\text{ref}}= 0.15$m/s while experiencing the same class of attacks as in Section \ref{sec:simulation}. This time the IMU sensor that measures angle $\theta$ is spoofed while our Randomness Monitor is augmented with the BDD. Fig.~\ref{fig:Experiment_ang} shows the UGV position while traveling to the four goal points. For both attacks the vehicle enters the restricted area (marked by red tape) while the boundary detector (BDD) does not see the attack in each case. The alarm rate for the WSR test increases for the case under \textit{Attack~\#1} (solid line) and the SIR test alarm rate increases during the case for \textit{Attack ~\#2} (dashed line), as expected.
\vspace{-1pt}
\begin{figure}[htb!]
\centering
\vspace{-4pt}
\includegraphics[width=1\linewidth]{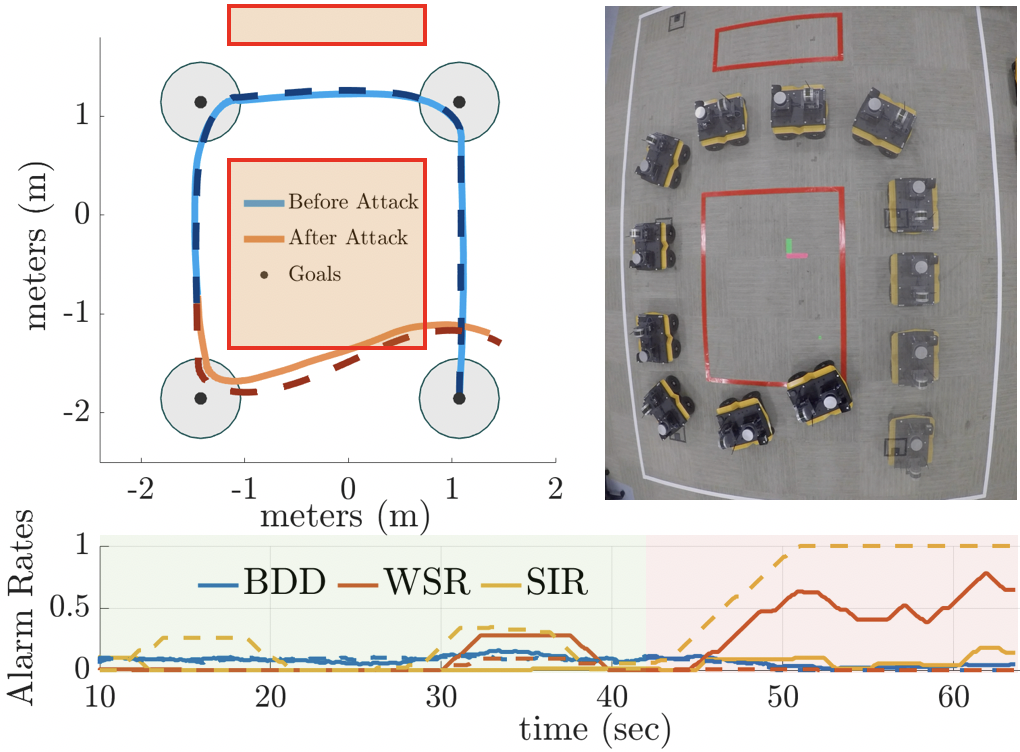}
\vspace{-17pt}
\caption{UGV position under \textit{Attack~\#1} (solid line) and \textit{Attack~\#2} (dashed line). The bottom graph displays the resulting alarm rates.}
\label{fig:Experiment_ang}
\vspace{-7pt}
\end{figure}

\end{section}
\begin{section}{Conclusions \& Future Work} 
\label{sec:conclusion}

In this paper we have proposed a monitoring framework to find cyber-attacks that present non-random behavior with the intention to hijack a system from a desired state. Our framework leverages the Wilcoxon Signed-Rank test and Serial Independence Runs test over a sliding monitor window to detect stealthy attacks when augmented to state-of-the-art boundary detectors. Among the key results of this work we provide: bounds for desired false alarm rate for each test which are leveraged to detect attacks, bounds on state deviation under worst case attack scenario, demonstrating that the proposed framework outperform detectors that solely use boundary tests. The proposed approach was validated through simulations and experiments on UGV case studies. 

In our future work we plan to extend the current work to remove this dependency from the monitoring window and plan to leverage our approach in systems with redundant sensors to remove the compromised sensors and build attack resilient controllers similar to our previous work in \cite{bezzo_SE}.
\vspace{-2pt}

\section*{Acknowledgments} 
\vspace{-2pt}
This work is based on research sponsored by ONR under agreement number N000141712012, and NSF under grant \#1816591.
\vspace{-2pt}

\end{section}

\bibliographystyle{IEEEtran}
\bibliography{ms}

\end{document}